\documentclass[11pt]{article}

\usepackage[letterpaper,margin=1in]{geometry}
\usepackage[T1]{fontenc}
\usepackage[utf8]{inputenc}
\usepackage{lmodern}
\usepackage{amsmath,amssymb,amsthm,mathtools,mleftright}
\usepackage{thmtools,thm-restate}
\usepackage[normalem]{ulem}

\usepackage[ruled,vlined,linesnumbered]{algorithm2e}
\usepackage[hidelinks]{hyperref}
\usepackage[nameinlink,noabbrev]{cleveref}
\usepackage[nocompress,sort]{cite}
\usepackage{titling}

\newtheorem{theorem}{Theorem}
\newtheorem{lemma}{Lemma}
\newtheorem{corollary}{Corollary}
\theoremstyle{remark}

\makeatletter
\renewcommand{\paragraph}{%
  \@startsection{paragraph}{4}%
  {\z@}{1.6ex \@plus 1ex \@minus .2ex}{-0.2em}%
  {\normalfont\normalsize\bfseries}%
}
\makeatother

\newcommand\thisenumsymbol{}

\newlength{\bibitemsep}
\newlength{\bibparskip}
\let\oldthebibliography\thebibliography
\renewcommand\thebibliography[1]{%
  \oldthebibliography{#1}%
  \setlength{\parskip}{\bibitemsep}%
  \setlength{\itemsep}{\bibparskip}%
}

\makeatletter
 \newcommand{\linkdest}[1]{\Hy@raisedlink{\hypertarget{#1}{}}}
\makeatother

\newcommand{\E}{\mathbb E}
\newcommand{\bF}{\mathbb F}
\newcommand{\Prb}{\mathbb P}
\newcommand{\OPT}{\operatorname{OPT}}
\newcommand{\ALG}{\mathrm{ALG}}
\newcommand{\spn}[1]{\left\langle #1 \right\rangle}

\newcommand{\supp}{\operatorname{supp}}

\def\dif#1{\mathop{d #1}}

\DeclareMathOperator{\clo}{cl}
\DontPrintSemicolon
\SetAlFnt{\small}
\SetAlgoCaptionSeparator{.}
\SetKwFunction{Update}{UpdateDistributions}
\hypersetup{pdftitle={The Strong Secretary Conjecture is True for Linear Matroids},pdfauthor={Kristof Berczi, Shaddin Dughmi, Vasilis Livanos, Jose Soto, Victor Verdugo}}

\def\floor#1{\left\lfloor #1 \right\rfloor}

\def\set#1{\left\{ #1 \right\}}

\newcommand{\prn}[2][\!]{\mleft({#2}\mright)}
\def\brk#1{\left[#1\right]}
\newcommand{\1}[1]{\mathbf{1}\brk{ #1 }}  
\def\midd{\:\middle|\:}

\title{The Strong Secretary Conjecture is True
for Linear Matroids}

\thanksmarkseries{alph}
\author{
    Krist\'of B\'erczi
    \thanks{
        HUN-REN--ELTE Egerv\'ary Research Group, Department of Operations Research, E\"otv\"os Lor\'and University, and HUN-REN Alfr\'ed R\'enyi Institute of Mathematics, Budapest, Hungary. \tt kristof.berczi@ttk.elte.hu 
    }
    \and Shaddin Dughmi
    \thanks{
        Department of Computer Science, University of Southern California, Los Angeles, USA. {\tt shaddin@usc.edu} 
    }
    \and Vasilis Livanos
    \thanks{
        Department of Computer Science, University of Southern California, Los Angeles, USA. {\tt vas.livanos@gmail.com} 
    }
    \and Jos\'e A. Soto
    \thanks{
        Department of Mathematical Engineering and Center for Mathematical Modeling, Universidad de Chile, Santiago, Chile. {\tt jsoto@dim.uchile.cl}
    }
    \and Victor Verdugo
    \thanks{
        Institute for Mathematical and Computational Engineering, and Department of Industrial and Systems Engineering, Pontificia Universidad Católica de Chile, Santiago, Chile. {\tt victor.verdugo@uc.cl} 
    }
}
\date{September 17th, 2026}

\begin{document}
\maketitle

\begin{abstract}
We prove a $1/e$ guarantee for the matroid secretary problem on linear matroids, therefore settling the strong secretary conjecture in this class of matroids. The result holds both when the matroid is known in advance and when a linear representation over a finite field is given online. In the known-matroid model, the result extends more generally to matroids admitting a finitary modular extension. Each element of a fixed optimal basis is selected with probability at least $1/e$.

\textbf{Concurrent Discovery Disclosure:} The proof of the main result in this manuscript was obtained in a conversation with ChatGPT-6 Astra on Tuesday, September 15, 2026 at 1:02\,AM PDT. We then prepared this manuscript for public release, with the intent of uploading it on the morning of Thursday, September 17, 2026. In the early morning hours of September 17, while finalizing the submission, we discovered the manuscript of~\cite{abdi2026strong}, uploaded on September 16, 2026, which contains the same result via an essentially identical approach. We are sharing our manuscript nonetheless in case our exposition is of independent utility to the community, and we hope this experience stimulates broader discussion about concurrent discovery in the AI era.
\end{abstract}

\section{Introduction}
\label{sec:intro}

The \emph{Matroid Secretary Problem}, introduced by Babaioff,
Immorlica, and Kleinberg~\cite{BIK2007}, extends the classical secretary
problem from selecting a single element to selecting an independent set of
a matroid. Each element $e$ has a fixed nonnegative weight $w(e)$, and the
elements arrive in uniformly random order, with weights revealed upon
arrival. The algorithm must irrevocably decide whether to accept the element
while keeping the selected set independent. The goal is to obtain a constant
fraction of $w(\OPT(E))$.

Let $E$ denote the ground set. For every $Y\subseteq E$, let $\OPT(Y)$
denote the maximum-weight basis of the matroid restricted to $Y$. We assume
without loss of generality that the weights are distinct, so $\OPT(Y)$ is
unique. We say that an algorithm is \emph{$c$-probability-competitive} if
\begin{equation*}
\Prb\brk{e\in\ALG}\ge c\;\text{ for every }e\in\OPT(E).
\end{equation*}
For nonnegative weights, this implies an expected weight of at least $c$
times $w(\OPT(E))$. In the classical secretary problem, which corresponds
to a rank-one matroid, the optimal probability guarantee is $1/e$.

A matroid $M$ on ground set $E$ is {\it linear} if there is a field $\bF$ and a collection of vectors $\{v_e\colon e\in E\}$ over $\bF$, called a \emph{representation} of $M$ over $\bF$,  such that $I\subseteq E$ is independent in $M$ if and only if $\{v_e\colon e\in I\}$ is linearly independent. 
We prove that the optimal $1/e$ guarantee is achievable in the
\emph{online-representation model}, in which a representation of the
matroid over a finite field is revealed online. The algorithm knows
the number of elements and the underlying finite field in advance, but not
the representation. Upon arrival, each element reveals its weight and its
vector, expressed in a common coordinate system over that field. Thus, the
algorithm learns the representation one vector at a time as the elements arrive. In particular, the algorithm does not know the matroid, or a representation of it, before the arrival process begins. The algorithm is ordinal, using the weights only through relative comparisons.

\begin{theorem}\label{thm:main}
In the online-representation model over a finite field, there is an ordinal
online algorithm that always selects an independent set $\ALG$ and satisfies
\begin{equation*}
\Prb\brk{e\in\ALG}\ge 1/e
\;\text{ for every }e\in\OPT(E).
\end{equation*}
Consequently, $\E[w(\ALG)]\ge w(\OPT(E))/e.$
\end{theorem}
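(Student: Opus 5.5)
The plan is to generalize the classical Dynkin/Lindley $1/e$ rule by using arrival times and the \emph{modular} structure of the subspace lattice over a finite field. Let each element $e$ receive an independent uniform arrival time $t_e\in[0,1]$, which is equivalent to a uniformly random order. For a threshold time $\tau$, define a rule of the form: ``accept $e$ arriving at time $t$ iff $e\in\OPT(\{f: t_f\le t\})$ and $e$ is not spanned by the previously accepted elements.'' Here spanning is taken inside a suitably chosen \emph{flat} of the finite-field vector space. Over $\bF_q$ the ambient space is finite. The algorithm can therefore reason about all vectors of $\bF_q^d$, including ones that never arrive, as ``phantom'' elements. This is what the finitary modular extension buys us. The projective geometry $PG(d-1,q)$ is a finite modular matroid containing $M$, and $d\le n$ can be fixed by working in the span of arrived vectors.

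The first step is a single-element reduction. For $e\in\OPT(E)$, look at the time-$t$ prefix $Y_t=\{f:t_f\le t\}$. The element $e$ lies in $\OPT(Y_{t_e}\cup\{e\})$ automatically, since $e\in\OPT(E)$. I would then show that $e$ is accepted whenever the ``blocking'' flat, namely the span of accepted elements at time $t_e$, does not contain $v_e$. The second step is the modular-lattice key lemma. In a modular lattice, the span of the elements accepted up to time $t$, restricted to the flat $F_e(t)=\clo\{f\in Y_t: w(f)>w(e)\}$, should itself be determined by a single ``best'' element in the sense of the classical problem. Concretely, I want
\[
\Prb\brk{e \text{ rejected}}\le \Prb\brk{\text{the max-weight element of } Y_{t_e}\text{ arriving in } F_e\text{-direction came after }\tau},
\]
so that conditioned on $t_e=t>\tau$ the success probability is at least $\tau/t$. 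Modularity, $r(A)+r(B)=r(A\wedge B)+r(A\vee B)$ for flats, is what lets accepted elements ``charge'' to distinct ranks in the quotient by $F_e$. It also replaces the non-modular pathologies that break the naive greedy in general matroids.

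Integrating as in the classical secretary computation gives $\int_\tau^1 \tau/t\,dt=-\tau\ln\tau$, which is maximized at $\tau=1/e$ with value $1/e$. Ordinality is immediate, because the rule only compares weights, and independence is enforced by construction. The expectation bound follows by linearity, since $w(\ALG)\ge\sum_{e\in\OPT(E)}w(e)\1{e\in\ALG}$ and each term is at least $w(e)/e$.

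The main obstacle is the key lemma in the second step. I need to design which flat to project onto, possibly by choosing a random or canonical complement using the finite field, and to prove that the only way $e$ is blocked is an event of probability at most $1-\tau/t_e$. My first attempt would be to contract by $F_e(t_e)$ in the projective geometry. I would then argue inductively over arrival times that the accepted elements projected to the quotient form an independent set each of whose members was the best in its parallel class of the quotient at arrival. With that in hand, blocking $e$ requires some earlier element in $e$'s quotient class to have arrived after $\tau$, and the classical ``best-so-far arrived before $\tau$'' argument then closes the proof.
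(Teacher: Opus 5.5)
\textbf{There is a genuine gap, and it is exactly your key lemma.} As literally stated, your rule accepts every improving element arriving after $\tau$ that is not spanned by $\ALG$. The phrase ``spanning inside a suitably chosen flat'' is never made concrete, and feasibility forces you to reject all spanned elements anyway. This rule does not satisfy $\Prb\brk{e\in\ALG \midd t_e=t}\ge\tau/t$, and it is not even constant-competitive on linear matroids.

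For a counterexample, take the graphic matroid (representable over $\bF_2$) on vertices $u,v,w_1,\dots,w_m$ with edges $e=uv$, $a_j=uw_j$, $b_j=w_jv$. Let $e$ be the heaviest edge, so $e\in\OPT(E)$ and your flat $F_e(t)$ is trivial. Condition on $t_e=t>\tau$, and let $E_j$ be the event that $a_j$ and $b_j$ both arrive in $(\tau,t)$, the lighter one first. The only edges at $w_j$ are $a_j$ and $b_j$. Hence, under $E_j$, both are improving on arrival. The first is accepted, and the second is accepted unless $\ALG$ already contains a $u$--$v$ path. Either way $e\in\clo(\ALG)$ at time $t$, so $e$ is rejected. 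The events $E_j$ are independent, each with probability $(t-\tau)^2/2$, hence
\[
\Prb\brk{e\in\ALG}\le\int_\tau^1\Bigl(1-\tfrac{(t-\tau)^2}{2}\Bigr)^m\,dt=O\bigl(m^{-1/2}\bigr).
\]
This also shows where your sketched proof breaks. An independent set of accepted elements, each best in its quotient parallel class, can span $e$ jointly without containing anything parallel to $e$. So ``blocking requires an earlier element in $e$'s class'' fails as soon as the quotient has rank at least $2$, and modularity does not repair this.

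The missing idea is that you must \emph{not} accept every improving element. The paper accepts an improving element arriving in position $i$ with probability $p_X(e,W)$ that depends on the current state $W=\spn{\ALG}$ and averages exactly $k/(i-1)$ over $W\sim\mu_{X-e}$. These probabilities are chosen by an LP that tracks the distribution $\mu_Y$ of $\spn{\ALG}$ for every subset $Y$ of the observed elements. The LP enforces the Dimension Invariant Inequality $\E\brk{\dim(\spn{\ALG}\cap L)}\le(1-k/|X|)\dim L$ for \emph{every} subspace $L$. Taking $L=\spn{e}$ gives $\Prb\brk{e\notin\spn{\ALG}}\ge k/(i-1)$, the analogue of your $\tau/t$. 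In the example above, this invariant is what keeps the probability that $\ALG$ contains a $u$--$v$ path below $1-k/(i-1)$.

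All the difficulty then lies in showing that these constraints can be satisfied simultaneously. The paper does this via minimax duality, supermodularity of $g_p$ (where modularity of the subspace lattice genuinely enters), uncrossing to chains, and an explicit greedy solution on each chain. The finite field is used only to make the LP finite, not to supply phantom elements.
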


The same guarantee is achievable for every linear matroid
in the \emph{known-matroid model}. Here the entire matroid is known before
the arrival process, while only the weights are revealed online. By a
theorem of Rado~\cite{Rado1957}, every finite matroid representable over
some field also admits a representation over a finite field. Therefore,
ignoring computational and query complexity, we may compute such a
representation and apply Theorem~\ref{thm:main}. In particular, no
representation need be supplied as part of the input, and the result applies
to matroids representable over an arbitrary field.

The same argument extends to known matroids that are restrictions of
finitary modular matroids, whose lattices of flats satisfy the modular rank
identity. In particular, it applies to the fully modular extendable matroids
studied by B\'erczi, Gehér, Imolay, Lovász, Padró and Schwarcz~\cite{BercziEtAl}.

The main idea behind our algorithm is to control how much of each subspace
is spanned by the accepted vectors. For every subspace $L$, we maintain an
upper bound on the expected dimension of $L\cap \spn{\ALG}$, where
$\spn{\ALG}$ denotes the linear span of the vectors selected by the
algorithm.

We then choose acceptance probabilities that preserve these bounds while
ensuring that an arriving element is accepted with a prescribed probability
whenever it belongs to $\OPT(X)$, where $X$ is the set of elements revealed
so far. The probability is over both the preceding arrival order and the
internal randomness of the algorithm.

The above requirements placed on acceptance probabilities lead to a large linear program (LP), and the main
technical step in our proof is to establish its feasibility. Duality and uncrossing
reduce the feasibility argument to chains of subspaces, for each of which
we give an explicit feasible solution. Since our algorithm maintains distributions over possible values of $\spn{\ALG}$ and solves
large LPs, we do not claim a polynomial-time implementation.

\subsection{Related Work}
\label{sec:related-work}

The classical secretary problem was studied by Lindley~\cite{lindley}
and Dynkin~\cite{dynkin}. Kleinberg~\cite{klein-secr} studied the
multiple-choice secretary problem, in which up to $k$ elements may be
selected. Later, Babaioff, Immorlica, and Kleinberg~\cite{BIK2007} introduced the
Matroid Secretary Problem; see also the journal version with
Kempe~\cite{mat-sec}. They conjectured that a constant expected-weight
(or utility-competitive) guarantee is achievable for every matroid.

For general rank-$r$ matroids, Babaioff et al.~\cite{BIK2007,mat-sec}
obtained an $\Omega(1/\log r)$ expected-weight guarantee. This was improved
to $\Omega(1/\sqrt{\log r})$ by Chakraborty and
Lachish~\cite{sqrt-log-r-matroid-sec}, and then to
$\Omega(1/\log\log r)$ by Lachish~\cite{loglogrank-matroid-sec1}.
Feldman, Svensson, and Zenklusen~\cite{loglogrank-matroid-sec2} later gave
a simpler algorithm achieving the same guarantee.

A stronger notion is probability-competitiveness, which requires a guarantee
for each element of the optimum. Bateni, Hajiaghayi, and
Zadimoghaddam~\cite{bateni-secretary} obtained an
$\Omega(1/\log^2 r)$ probability-competitive guarantee, and Soto,
Turkieltaub, and Verdugo~\cite{forbidden-paper} improved it to
$\Omega(1/\log r)$. The strong Matroid Secretary Conjecture asks whether
the optimal rank-one guarantee $1/e$ can be achieved under this criterion
for every matroid~\cite{forbidden-paper}.

\paragraph{Singla's algorithm. }
\,Singla~\cite{Singla2026} recently resolved the original Matroid Secretary
Conjecture: he gave an ordinal algorithm that, even in the unknown-matroid
model, selects each element of the optimum with probability at least $1/4$.
This answers the long-standing question of whether arbitrary matroids admit
a constant-factor competitive guarantee, while the strong Matroid Secretary Conjecture remains
open.

Singla's proof proceeds through a matroid version of the two-sided Game of
Googol of Correa, Cristi, Epstein, and Soto~\cite{correa-googol}. The
algorithm maintains a configuration initialized by a random sample and
updates it as elements arrive via reversible operations. For any fixed
element, even after conditioning on the configuration at the moment it is
presented, its arrival time can be treated as uniform with respect to the
remaining elements. This makes it possible to condition on
the current configuration without changing the distribution of the arrival
order.

A substantial line of work had previously established constant guarantees
for important classes of matroids, including
transversal~\cite{trans-secretary,KesselheimRTV13,forbidden-paper},
graphic~\cite{babaioff-secretary-2,BIK2007,banihashem2025beating,
labeling-schemes,korula-pal-graphic,forbidden-paper},
laminar~\cite{labeling-schemes,zahra-laminar-secretary,ImW11,
free-order-secretary,ma-tang-secretary,forbidden-paper}, and
regular matroids~\cite{dinitz-secretary}. For several of these classes, the
known guarantees were improved over a sequence of works using different
approaches.

\paragraph{Linear-programming approaches. }
\,Linear programming has also been used to design and analyze secretary
algorithms. Buchbinder, Jain, and Singh~\cite{BuchbinderJS14} gave a finite
LP characterization of secretary algorithms, and Chan, Chen, and
Jiang~\cite{ChanCJ15} developed a continuous LP approach for generalized
secretary problems. Related LP formulations have since been used in
secretary problems with advice and in sample-driven optimal
stopping~\cite{DuttingLLV24,CorreaCES24}.

\paragraph{Matroid information model. }
\,The information available about the matroid gives rise to two standard
variants~\cite{GharanV13,matroid-embeddings}. In the
\emph{known-matroid model}, the ground set and the entire independence
structure are available before the arrival process, and only the weights are
revealed online. In the \emph{unknown-matroid model}, the algorithm knows the
number of elements in advance but can query independence only for subsets of
elements that have already arrived. In particular, it has no information
about dependencies involving unseen elements. The online-representation
model considered in this paper provides a different type of information:
the representation is not known in advance, but each arriving element
reveals its vector in a common coordinate system.

For many special classes of matroids, the natural online model similarly
reveals a representation of each arriving element. For graphic matroids, an
arriving edge reveals its endpoints, while for transversal matroids an
arriving element reveals its neighborhood in the underlying bipartite
representation. This additional information has been important in obtaining
constant guarantees. For example, the $1/e$ guarantee of Kesselheim,
Radke, T{\"o}nnis, and V{\"o}cking~\cite{KesselheimRTV13} for transversal
matroids uses this bipartite representation. Similarly, the strongest
previous guarantees for graphic matroids use the graph
representation~\cite{banihashem2025beating,labeling-schemes}, whereas
D{\"u}tting, Paes Leme, P{\'a}l, and Patel~\cite{DuttingEtAl2026} recently
obtained a $1/36$ utility-competitive guarantee using only independence
queries on arrived elements. Our online-representation model takes the
analogous viewpoint for linear matroids.

Cristi, D{\"u}tting, Kleinberg, Paes Leme, and
Patel~\cite{matroid-embeddings} introduced online matroid embeddings, which
map a matroid revealed online into a larger known matroid. They constructed
embeddings for binary and laminar matroids and obtained reductions from
unknown to known matroids with an arbitrarily small loss in the
expected-weight guarantee. They also proved limitations of this embedding
approach for general matroids.

\paragraph{Other approaches and models. }
\,Dughmi~\cite{dughmi1,dughmi2} used duality to establish an equivalence
between constant-competitive matroid secretary algorithms and random-order
contention resolution for possibly correlated inputs. On the negative side,
Bahrani, Beyhaghi, Singla, and Weinberg~\cite{BahraniBSW21} proved
limitations of broad classes of greedy and partition-based algorithms.
Abdolazimi, Karlin, Klein, and Oveis Gharan~\cite{matroid-partition} showed
that the partition-property approach cannot give a constant guarantee for
general binary matroids. These barriers concern the specified algorithmic
approaches rather than the existence of constant-competitive algorithms.

Constant guarantees were also known under other online models.
Soto~\cite{soto-secretary} studied random assignment of weights to elements,
and Oveis Gharan and Vondr{\'a}k~\cite{GharanV13} extended this line to
adversarial arrival orders. Santiago, Sergeev, and
Zenklusen~\cite{SantiagoSZ23} obtained a constant guarantee for random
assignment and random arrival order without knowing the matroid. Jaillet,
Soto, and Zenklusen~\cite{free-order-secretary} considered the free-order
model, where the algorithm chooses the order in which elements are
inspected. For more general objectives, Feldman and
Zenklusen~\cite{FeldmanZenklusen2018} reduced the submodular matroid
secretary problem to the linear-objective problem. As observed by
Singla~\cite{Singla2026}, his theorem therefore also yields a constant
guarantee for nonnegative submodular objectives.
\section{The Online Representation Algorithm}
\label{sec:algorithm}

In this section, we describe our algorithm for linear matroids in the online-representation model. Let $n$ be the number of elements, and fix a sample size $1\leq k<n$. The first $k$ arrivals form the sample and are rejected. In the classical rank-$1$ secretary algorithm, if the best element arrives in position $i>k$, then it is selected precisely when the best of the first $i-1$ elements appears in the sample, which happens with probability $k/(i-1)$. We seek the same conditional probability in a matroid: if $X$ is the set of the first $i$ observed elements and $e\in\OPT(X)$ arrives last, we want to accept $e$ with probability $k/(i-1)$. As in the rank-$1$ case, choosing $k$ appropriately and averaging over the arrival position yields a $1/e$ guarantee. We obtain these acceptance probabilities by solving linear programs that simultaneously enforce the desired acceptance probabilities and control, for every subspace $L$, the expected dimension of $L\cap\spn{\ALG}$.

\subsection{The Dimension Invariant Inequality}
\label{sec:invariant}

When no confusion may arise, we identify an observed element with its
representing vector. We use the notation $\spn{\cdot}$ for \emph{linear
span}. If $X$ is a set of observed elements, then $\spn{X}$ denotes the
span of their representing vectors, and $\spn{e}$ denotes the subspace
spanned by the vector representing $e$. We write $W\leq V$ when $W$ is
a subspace of $V$, and for subspaces $W_1,W_2$ we write
$W_1+W_2\coloneqq\spn{W_1\cup W_2}$.

For a set $X$ and an element $e$, we use the shorthand
$X-e\coloneqq X\setminus\{e\}$ and $X+e\coloneqq X\cup\{e\}$. When an
element $e$ arrives, let $X$ be the set of elements observed up to and
including $e$. We call $e$ \emph{improving} if $e\in\OPT(X)$.

After the sample, the \emph{state} of the algorithm consists of an observed set $X$, an accepted set $\ALG$, and the span $\spn{\ALG}$ of the accepted vectors. In addition to its state, our algorithm maintains some auxiliary information for every set $Y\subseteq X$. Specifically, it stores a distribution $\mu_Y$ supported on the subspaces of $\spn{Y}$, which is defined recursively. Later we show that, conditional on $Y$ being the set of the first
$|Y|$ arrivals, $\mu_Y(W)$ is exactly the probability that
$\spn{\ALG}=W$ after the elements of $Y$ have been processed.

Achieving the desired acceptance probabilities is more subtle than in the rank-$1$ case because whether an element can be accepted depends on the span of the elements accepted before it. Accordingly, the algorithm stores acceptance probabilities $p_Y(e,W)$, for every $e\in Y$ and $W\in\supp(\mu_{Y-e})$. The value $p_Y(e,W)$ is used when $e$ arrives last among the elements of $Y$ and $\spn{\ALG}=W$ immediately before its arrival. 

To compute these probabilities, however, it is not enough to maintain information only for the sets that occur as prefixes of the realized arrival order: computing $p_Y$ requires $\mu_{Y-e}$ for every $e\in Y$. We therefore maintain $(\mu_Y,p_Y)$ for every $Y\subseteq X$.

For every $Y\subseteq X$ with $|Y|\leq k$, if the elements of $Y$ were the first $|Y|$ arrivals, the algorithm would reject all of them as part of the sample. Therefore, we set $\mu_Y(\{0\})=1$, $\mu_Y(W)=0$ for every $W\neq\{0\}$, and $p_Y(e,W)=0$ for every $e\in Y$ and $W\in\supp(\mu_{Y-e})$. Now suppose that $|X|=i>k$. We want every improving element to be accepted with probability $k/(i-1)$, conditional on $X$ being the set of the first $i$ arrivals.

The property that makes this possible is the following inequality,
which we call the \emph{Dimension Invariant Inequality}. After
processing a fixed set of elements $X$ with $|X|\geq k$, we require
that
\begin{equation}\label{eq:dimension-invariant}
    \E_{W \sim \mu_X}\brk{\dim(W \cap L)}
    \leq \prn{1 - \frac{k}{|X|}} \dim(L)
    \quad \text{for every subspace $L$.}
\end{equation}
To see why we use the factor $1-k/|X|$, let $|Y|=i>k$ and suppose that
the last element $e$ to arrive from $Y$ is improving. Apply
\eqref{eq:dimension-invariant} to $Y-e$ and $L=\spn{e}$. Since $e$ is
not a loop, $\spn{e}$ has dimension one. Moreover, for a random
subspace $W$ distributed according to $\mu_{Y-e}$,
$\dim(W\cap\spn{e})$ is $1$ if $e\in W$ and $0$ otherwise. Hence
\eqref{eq:dimension-invariant} implies that
$\Prb\brk{e\notin W}\geq k/(i-1)$. Thus, before $e$ arrives, it can be added to $\ALG$ without violating
independence with at least the desired probability. We now need to
choose the acceptance probabilities so that every improving element is
accepted with exactly this probability and the Dimension Invariant
Inequality continues to hold.

Notice that it suffices to guarantee the Dimension Invariant Inequality
for $L\leq\spn{X}$. This is because every $W\in\supp(\mu_X)$ satisfies
$W\leq\spn{X}$, and thus $W\cap L=W\cap(L\cap\spn{X})$. Therefore,
applying \eqref{eq:dimension-invariant} to $L\cap\spn{X}$ gives the
inequality for an arbitrary ambient subspace $L$.

Fix the current observed set $X$, and let $Y\subseteq X$ with
$|Y|=i>k$. We next define $\mu_Y$ from the distributions
$\mu_{Y-e}$ and the acceptance probabilities $p_Y(e,W)$. Suppose that
$\mu_A$ has already been computed for every $A\subsetneq Y$.

Suppose also that acceptance probabilities $p_Y(e,W)\in[0,1]$ have been
chosen for every $e\in Y$ and $W\in\supp(\mu_{Y-e})$, and write $p$
for this collection. For every subspace $U\leq\spn{Y}$, define
\begin{equation}\label{eq:solution-distribution}
    \mu_Y(U)
    =
    \frac{1}{i}
    \sum_{e\in Y}
    \sum_{W\in\supp(\mu_{Y-e})}
    \mu_{Y-e}(W)
    \brk{(1-p_Y(e,W))\mathbf{1}_{\{U=W\}} + p_Y(e,W)\mathbf{1}_{\{U=W+\spn{e}\}}}.
\end{equation}
Since each $\mu_{Y-e}$ is a probability distribution and
$p_Y(e,W)\in[0,1]$, the values in
\eqref{eq:solution-distribution} are nonnegative and sum to one over
all subspaces $U\leq\spn{Y}$. Thus, $\mu_Y$ is a probability
distribution on the subspaces of $\spn{Y}$.

For every $e\in Y$ and every subspace $L$, let
\[
u_e(L)
=
\E_{W\sim\mu_{Y-e}}\brk{\dim(W\cap L)}
\]
and
\[
\Delta_e(W,L)
=
\dim((W+\spn{e})\cap L)-\dim(W\cap L).
\]
Thus $u_e(L)$ is the expected dimension of $W\cap L$ under
$\mu_{Y-e}$, and $\Delta_e(W,L)$ is the change in this dimension when
$W$ is replaced by $W+\spn{e}$.

Define
\begin{equation}\label{eq:violation}
    g_p(L)
    \coloneqq
    -(i-k)\dim(L)
    +
    \sum_{e\in Y}u_e(L)
    +
    \sum_{e\in Y}
    \E_{W\sim\mu_{Y-e}}
    \brk{p_Y(e,W)\Delta_e(W,L)}.
\end{equation}
The following lemma relates $g_p(L)$ to the Dimension Invariant Inequality for a subspace $L$ under the distribution $\mu_Y$.
\begin{lemma}\label{lem:dimension-update}
Let $Y$ be a set with $|Y|=i>k$. Suppose that, for every $e\in Y$,
$\mu_{Y-e}$ is a probability distribution supported on the subspaces
of $\spn{Y-e}$. Let
\[
    p=(p_Y(e,W):e\in Y,\ W\in\supp(\mu_{Y-e}))
\]
satisfy $p_Y(e,W)\in[0,1]$ for every $e\in Y$ and
$W\in\supp(\mu_{Y-e})$. Define $\mu_Y$ by
\eqref{eq:solution-distribution} and $g_p$ by
\eqref{eq:violation}. Then, for every subspace $L$,
\[
    \E_{U \sim\mu_Y}\brk{\dim(U\cap L)} = \frac{1}{i} \prn{(i - k) \dim(L) + g_p(L)}.
\]
Consequently, for every subspace $L$, the Dimension Invariant
Inequality for $L$ holds if and only if $g_p(L)\leq0$. In particular,
if
\[
    g_p(L)\leq0
    \qquad \forall\,L\leq\spn{Y},
\]
then $\mu_Y$ satisfies the Dimension Invariant Inequality.
\end{lemma}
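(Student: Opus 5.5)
The identity is a direct computation: substitute the definition \eqref{eq:solution-distribution} of $\mu_Y$ into $\E_{U\sim\mu_Y}[\dim(U\cap L)]$ and regroup. The consequences then follow by comparing with the right-hand side of \eqref{eq:dimension-invariant}, together with the earlier remark that it suffices to check subspaces $L\leq\spn{Y}$.

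\textbf{Step 1 (expansion).} Fix a subspace $L$. By \eqref{eq:solution-distribution}, $\mu_Y$ is the mixture with weights $1/i$, over $e\in Y$, of the laws of a random subspace $U$. Each such law is obtained by drawing $W\sim\mu_{Y-e}$ and then setting $U=W+\spn{e}$ with probability $p_Y(e,W)$ and $U=W$ otherwise. By linearity of expectation,
\[
\E_{U\sim\mu_Y}\brk{\dim(U\cap L)}=\frac1i\sum_{e\in Y}\E_{W\sim\mu_{Y-e}}\brk{(1-p_Y(e,W))\dim(W\cap L)+p_Y(e,W)\dim((W+\spn{e})\cap L)}.
\]

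\textbf{Step 2 (regrouping).} Rewrite the integrand as $\dim(W\cap L)+p_Y(e,W)\Delta_e(W,L)$. Summing over $e$ gives
\[
\frac1i\Bigl(\sum_e u_e(L)+\sum_e\E_{W\sim\mu_{Y-e}}\brk{p_Y(e,W)\Delta_e(W,L)}\Bigr).
\]
By \eqref{eq:violation}, the quantity in parentheses equals $g_p(L)+(i-k)\dim(L)$. This is exactly the claimed identity. All sums are finite because the field is finite and the supports are finite, so no convergence issues arise.

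\textbf{Step 3 (consequences).} Since $|Y|=i$, the Dimension Invariant Inequality for $\mu_Y$ and $L$ reads
\[
\E_{U\sim\mu_Y}\brk{\dim(U\cap L)}\le\Bigl(1-\frac ki\Bigr)\dim L=\frac{(i-k)\dim L}{i}.
\]
By the identity, this holds if and only if $g_p(L)\le 0$.

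For the final claim, observe that every $U\in\supp(\mu_Y)$ satisfies $U\leq\spn{Y}$. Indeed, $W\leq\spn{Y-e}$, and therefore $W+\spn{e}\leq\spn{Y}$. Hence $U\cap L=U\cap(L\cap\spn{Y})$ for any ambient subspace $L$. Applying the inequality to $L\cap\spn{Y}$ gives the same left-hand side with a right-hand side $\frac{i-k}{i}\dim(L\cap\spn{Y})\le\frac{i-k}{i}\dim L$. So the inequality holds for all $L$, just as in the remark preceding the lemma.

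There is no real obstacle. The only care needed is the bookkeeping in Step 2, namely checking that the $-(i-k)\dim(L)$ term in $g_p$ exactly cancels against the target bound.
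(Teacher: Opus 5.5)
Your proposal is correct and follows the paper's proof: expand $\E_{U\sim\mu_Y}[\dim(U\cap L)]$ using \eqref{eq:solution-distribution}, regroup it as $\frac{1}{i}\sum_{e}\bigl(u_e(L)+\E[p_Y(e,W)\Delta_e(W,L)]\bigr)$, and read off $g_p(L)+(i-k)\dim(L)$ from \eqref{eq:violation}. You then pass from $L\leq\spn{Y}$ to arbitrary $L$ via $U\cap L=U\cap(L\cap\spn{Y})$ and $\dim(L\cap\spn{Y})\le\dim(L)$, exactly as the paper does.
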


\begin{proof}
By \eqref{eq:solution-distribution} and the definitions of $u_e(L)$
and $\Delta_e(W,L)$,
\begin{align*}
\E_{U\sim\mu_Y}\brk{\dim(U\cap L)}
&=
\frac{1}{i}
\sum_{e\in Y}
\left(
u_e(L)
+
\E_{W\sim\mu_{Y-e}}
\brk{p_Y(e,W)\Delta_e(W,L)}
\right) \\
&=
\frac{1}{i}
\left(
(i-k)\dim(L)+g_p(L)
\right),
\end{align*}
where the second equality follows from \eqref{eq:violation}. Since
$|Y|=i$, the Dimension Invariant Inequality for a fixed subspace $L$
is
\[
\E_{U\sim\mu_Y}\brk{\dim(U\cap L)}
\leq
\frac{i-k}{i}\dim(L).
\]
By the identity above, this holds if and only if $g_p(L)\leq0$.

Therefore, if $g_p(L)\leq0$ for every $L\leq\spn{Y}$, the Dimension
Invariant Inequality holds for every $L\leq\spn{Y}$. For an arbitrary subspace $L$, every
$U\in\supp(\mu_Y)$ satisfies
$U\cap L=U\cap(L\cap\spn{Y})$. Applying the inequality to
$L\cap\spn{Y}$ and using
$\dim(L\cap\spn{Y})\leq\dim(L)$ gives the result for $L$.
\end{proof}
We will choose the acceptance probabilities so that
$p_Y(e,W)=0$ whenever $e\notin\OPT(Y)$, that is, whenever $e$ would
not be improving if it arrived last among the elements of $Y$.
Under this restriction, \eqref{eq:violation} becomes
\[
g_p(L)
=
-(i-k)\dim(L)
+
\sum_{e\in Y}u_e(L)
+
\sum_{e\in\OPT(Y)}
\E_{W\sim\mu_{Y-e}}
\brk{p_Y(e,W)\Delta_e(W,L)}.
\]

\subsection{Linear Programming and the Secretary Algorithm}

Fix the current observed set $X$, and let $Y\subseteq X$ with
$|Y|=i>k$. Suppose that $\mu_A$ has already been computed for every
$A\subsetneq Y$. We choose the acceptance probabilities $p_Y(e,W)$,
for $e\in Y$ and $W\in\supp(\mu_{Y-e})$, by solving the following
feasibility LP, whose variables are precisely these quantities
$p_Y(e,W)$. We write $p$ for their collection.
\begin{equation*}
\begin{alignedat}{2}
\mathrm{LP}(Y):\quad
& 0 \leq p_Y(e,W) \leq 1
&& \qquad \forall\, e\in Y,\ W\in\supp(\mu_{Y-e}), \\[2pt]
& p_Y(e,W) = 0
&& \qquad \forall\, e\in Y,\ W\in\supp(\mu_{Y-e})
   \text{ with } e\notin\OPT(Y)\text{ or }e\in W, \\[2pt]
& \E_{W\sim\mu_{Y-e}}\brk{p_Y(e,W)} = \frac{k}{i-1}
&& \qquad \forall\, e\in\OPT(Y), \\[2pt]
& g_p(L) \leq 0
&& \qquad \forall\, L\leq\spn{Y}.
\end{alignedat}
\end{equation*}
The first constraint ensures that the variables are valid probabilities. The second restricts acceptance to elements of $\OPT(Y)$ and assigns zero acceptance probability whenever $e\in W$, so that accepting $e$ preserves independence. For each $e\in\OPT(Y)$, the third sets the expected value of $p_Y(e,W)$ over $W\sim\mu_{Y-e}$ equal to the target value $k/(i-1)$. 

Finally, by Lemma~\ref{lem:dimension-update}, the last constraint is equivalent to requiring that the distribution $\mu_Y$ defined by \eqref{eq:solution-distribution} satisfies the Dimension Invariant Inequality. Since the representation is over a finite field, $\spn{Y}$ has only
finitely many subspaces, so $\mathrm{LP}(Y)$ is a finite linear program.

The following lemma shows that $\mathrm{LP}(Y)$ is feasible whenever
the distributions $\mu_{Y-e}$ satisfy the Dimension Invariant
Inequality \eqref{eq:dimension-invariant}. Its proof is deferred to
Section~\ref{sec:feasibility}.

\begin{lemma}\label{lem:lp-feasible}
Let $Y$ be a set with $|Y|=i>k$. Suppose that, for every $e\in Y$,
the distribution $\mu_{Y-e}$ satisfies the Dimension Invariant
Inequality \eqref{eq:dimension-invariant}. Then $\mathrm{LP}(Y)$ is
feasible.
\end{lemma}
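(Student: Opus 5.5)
The plan is to prove feasibility of $\mathrm{LP}(Y)$ by LP duality, then use uncrossing to reduce the dual certificates to chains of subspaces, and finally to exhibit an explicit solution for each chain.

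\textbf{Duality.} Let $P$ be the set of vectors $p$ satisfying the first three groups of constraints of $\mathrm{LP}(Y)$: the box constraints, the forced zeros, and the equalities $\E_{W\sim\mu_{Y-e}}[p_Y(e,W)]=k/(i-1)$ for $e\in\OPT(Y)$. First I check that $P$ is nonempty. For $e\in\OPT(Y)$, apply the Dimension Invariant Inequality for $\mu_{Y-e}$ to $L=\spn{e}$, exactly as in the discussion after \eqref{eq:dimension-invariant}. This gives $\Prb_{W\sim\mu_{Y-e}}[e\notin W]\ge k/(i-1)$. Hence $p_Y(e,W)=\frac{k/(i-1)}{\Prb[e\notin W]}\mathbf 1_{\{e\notin W\}}$ is a point of $P$.

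$P$ is a nonempty polytope and each $g_p(L)$ is affine in $p$. By LP duality (equivalently, von Neumann's minimax theorem applied to $P$ and the simplex of weights on subspaces), $\mathrm{LP}(Y)$ is feasible if and only if
\[
\min_{p\in P}\sum_{L\le\spn{Y}} y_L\, g_p(L)\le 0
\]
for every probability vector $y=(y_L)_{L\le\spn{Y}}$.

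\textbf{Uncrossing.} For fixed $p$, I would show that $L\mapsto g_p(L)$ is supermodular on the lattice of subspaces. The term $\dim(L)$ is modular. The remaining terms combine into
\[
\sum_e\E_{W}\brk{(1-p_Y(e,W))\dim(W\cap L)+p_Y(e,W)\dim((W+\spn{e})\cap L)}.
\]
This is a nonnegative combination of functions $L\mapsto\dim(V\cap L)$, and each of these is supermodular. Indeed,
\[
\dim(V\cap(L_1+L_2))\ge\dim((V\cap L_1)+(V\cap L_2))=\dim(V\cap L_1)+\dim(V\cap L_2)-\dim(V\cap L_1\cap L_2).
\]
Now replace two incomparable members $L_1,L_2$ of $\supp(y)$ by $L_1\cap L_2$ and $L_1+L_2$, moving weight $\min(y_{L_1},y_{L_2})$. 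This weakly increases $\sum_L y_Lg_p(L)$ for every $p$, and therefore also the minimum over $p\in P$. The standard potential argument (e.g.\ $\sum_L y_L\dim(L)^2$ increases) shows that the process terminates. Hence it suffices to treat $y$ supported on a chain $L_1<L_2<\dots<L_m$.

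\textbf{Chains.} Fix a chain with weights $y_j$. For $e\notin W$, the quantity $\Delta_e(W,L_j)\in\{0,1\}$ is nondecreasing in $j$. So it equals $\mathbf 1_{\{j\ge t_e(W)\}}$, where $t_e(W)$ is the first index $j$ with $(W+\spn{e})\cap L_j\ne W\cap L_j$.

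Summing the hypothesis over $e\in Y$ gives
\[
\sum_e u_e(L)\le i\Bigl(1-\tfrac{k}{i-1}\Bigr)\dim L=(i-k)\dim L-\tfrac{k}{i-1}\dim L.
\]
It therefore suffices to find $p\in P$ with
\[
\sum_j y_j\sum_{e\in\OPT(Y)}\E_W\brk{p_Y(e,W)\mathbf 1_{\{t_e(W)\le j\}}}\le\frac{k}{i-1}\sum_jy_j\dim L_j
\]
(plus whatever slack remains in the hypothesis). For each $e$, the natural choice is to place the required mass $k/(i-1)$ on the $W$'s with the \emph{largest} thresholds $t_e(W)$, filling greedily from the top of the chain.

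\textbf{Main obstacle.} The hard step is bounding the resulting charge at each $L_j$. For this I would apply the Dimension Invariant Inequality for $\mu_{Y-e}$ to the subspaces $L_j+\spn{e}$ rather than to $L_j$. This controls the probability that $e$ is already spanned modulo $L_j$, i.e.\ that $t_e(W)\le j$ is forced. Combining this with the matroid fact that $\OPT(Y)\cap\spn{L_j}$-type counts are at most $\dim L_j$ (elements of $\OPT(Y)$ are independent) should bound the expected number of $e\in\OPT(Y)$ charged at level $j$ by roughly $\dim L_j$ times $k/(i-1)$.

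If the greedy-top choice does not close the inequality directly, I would fall back on an explicit fractional construction along the chain. Order the elements of $\OPT(Y)$ by the level at which they enter $L_j+\cdots$, and choose each $p_Y(e,\cdot)$ proportional to the conditional slack $\Prb[e\notin W+L_j]$ across the levels $j$. The cumulative sums would then be checked level by level using the same invariants.
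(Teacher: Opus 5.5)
Your architecture matches the paper's: the nonempty polytope $P$, minimax over weightings of subspaces, supermodularity of $g_p$ with uncrossing to chains, and the greedy rule that puts $e$'s mass $\alpha\coloneqq k/(i-1)$ on the states with the largest threshold $t_e(W)$. One small point first: with real weights, a strictly increasing potential does not by itself make the uncrossing process terminate. The paper avoids this by taking, among the optimal dual vectors, one that maximizes $\sum_L\lambda_L\dim(L)^2$, and arguing by contradiction.

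The genuine gap is in the chain verification, which is the heart of the lemma. By summing the Dimension Invariant Inequality over all $e\in Y$ first, you discard each element's individual slack $(1-\alpha)\dim L-u_e(L)$. You then aim to bound the total charge $\sum_{e\in\OPT(Y)}q_e(L)$ by $\alpha\dim L$, where $q_e(L)\coloneqq\E_{W\sim\mu_{Y-e}}[p_Y(e,W)\Delta_e(W,L)]$. This target can fail for every $p\in P$. Take the following instance:
\begin{itemize}
\item $\alpha\le 1/2$, $f\in\OPT(Y)$, $L=\spn{f}$, and $e\in\OPT(Y)\setminus L$;
\item $Y$ contains a lighter element with vector $v_e+v_f$;
\item $\mu_{Y-e}$ puts mass $\alpha$ on $\spn{v_e+v_f}$ and $1-\alpha$ on $\spn{v_e}$.
\end{itemize}
This distribution satisfies the invariant. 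The only state in which $e$ may be accepted is $\spn{v_e+v_f}$, so every $p\in P$ has $p_Y(e,\spn{v_e+v_f})=1$, giving $q_e(L)=\alpha$. Also $q_f(L)=\alpha$ because $f\in L$. The total charge is therefore at least $2\alpha>\alpha\dim L$, even though $g_p(L)\le0$ is achievable here, since $u_e(L)=0$. Your parenthetical ``plus whatever slack remains'' is never turned into an argument.

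What is missing is a per-element accounting. First, the greedy order makes every $H_j=\{W\in\supp(\mu_{Y-e}): e\notin W+L_j\}$ an initial segment simultaneously. This gives the exact formula $q_e(L_j)=\bigl(\alpha-\Prb_{W\sim\mu_{Y-e}}[e\notin W+L_j]\bigr)^+$ at every level of the chain. Second, for $e\notin L$ one has $\Prb[e\in W+L]=u_e(L+\spn{e})-u_e(L)$. Hence
\[
u_e(L)+q_e(L)=\max\bigl\{u_e(L),\,u_e(L+\spn{e})-(1-\alpha)\bigr\}\le(1-\alpha)\dim L
\]
by the invariant for $\mu_{Y-e}$ at $L$ and at $L+\spn{e}$. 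In other words, the charge of an element outside $L$ is paid by that element's own slack. Only the elements of $\OPT(Y)\cap L$ draw on the global term $(i-k)\dim L-i(1-\alpha)\dim L=\alpha\dim L$. Each of them has charge exactly $\alpha$, and by independence there are at most $\dim L$ of them.

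You do name the invariant at $L_j+\spn{e}$, which is the right ingredient. But it must be combined with $u_e(L_j)$ element by element, as above, rather than used to count charged elements against $\alpha\dim L_j$. Your fallback construction is too vague to assess.
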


We fix a priori a rule for choosing a feasible solution of
$\mathrm{LP}(Y)$, requiring the chosen solution to be rational whenever
a rational feasible solution exists, and denote the chosen solution by
$p_Y$. We then compute $\mu_Y$ from $p_Y$ and the distributions
$\mu_{Y-e}$ using \eqref{eq:solution-distribution}. This completes the
recursive construction of $(\mu_Y,p_Y)$.

This fixed choice ensures that $(\mu_Y,p_Y)$ depends only on $Y$ and
the data revealed for its elements, and not on the order in which the
elements of $Y$ arrived. We will show in
Lemma~\ref{lem:algorithm-properties} that, throughout the execution,
the relevant linear programs have rational coefficients and admit
rational feasible solutions. Consequently, the rule above always
selects rational acceptance probabilities.

Throughout the execution, the algorithm maintains a table indexed by
subsets $Y$ of the observed elements. For each such $Y$, the table
stores the pair $(\mu_Y,p_Y)$, where
\[
\mu_Y=\bigl(\mu_Y(U):U\leq\spn{Y}\bigr)
\]
is a probability distribution on the subspaces of $\spn{Y}$, and
\[
p_Y=\bigl(p_Y(f,W):f\in Y,\ W\in\supp(\mu_{Y-f})\bigr)
\]
is the collection of acceptance probabilities associated with $Y$.
Once a pair $(\mu_Y,p_Y)$ is computed, it is stored and reused in all
subsequent calls to \textsc{UpdateDistributions}. Thus, when a new
element $e$ arrives, the subroutine only needs to compute the pairs
corresponding to sets $Y$ that contain $e$.

We can now describe the online algorithm. After each arrival, the
subroutine \textsc{UpdateDistributions} computes the new entries of
this table. The online algorithm then uses
$p_X(e,\spn{\ALG})$ to decide whether to accept the arriving element
$e$. Since these probabilities are rational, the corresponding
acceptance decisions can be implemented exactly.

\begin{algorithm}[ht]
\caption{Linear Matroid Secretary}
\label{alg:online}
\KwIn{Number of elements $n$ and sample size $k$}
$X \gets \emptyset$; $\ALG \gets \emptyset$\;
\ForEach{arriving element $e$}{
  Observe $e$'s vector and its relative weight rank among the observed
  elements\;
  $X \gets X+e$\;
  \Update{$X,e$}\;
  \If{$|X|>k$}{
    Accept $e$ with probability $p_X(e,\spn{\ALG})$\;
    \If{$e$ is accepted}{
      $\ALG \gets \ALG+e$\;
    }
  }
}
\KwRet{$\ALG$}
\end{algorithm}

\begin{algorithm}[ht]
\caption{\textsc{UpdateDistributions}$(X,e)$}
\label{alg:update}
\KwIn{The current observed set $X$, the newly arrived element $e$,
together with the revealed vectors and relative weight order of the
elements of $X$}
\ForEach{$Y\subseteq X$ with $e\in Y$, in increasing order of $|Y|$}{
  \eIf{$|Y|\leq k$}{
    Set $\mu_Y$ to the point mass at $\{0\}$ and $p_Y=0$\;
  }{
    Compute $\OPT(Y)$ and enumerate the subspaces of $\spn{Y}$\;
    Solve $\mathrm{LP}(Y)$ using $(\mu_{Y-f})_{f\in Y}$, and let
    $p_Y$ be the chosen solution\;
    Compute $\mu_Y$ from \eqref{eq:solution-distribution}\;
  }
  Store $(\mu_Y,p_Y)$\;
}
\end{algorithm}

\begin{lemma}\label{lem:algorithm-properties}
Consider Algorithm~\ref{alg:online} together with the subroutine
\textsc{UpdateDistributions}. For every fixed set $X$, conditional on
$X$ being the set of the first $|X|$ arrivals, the algorithm is
well-defined up to the processing of $X$, and the following properties
hold:
\begin{enumerate}
    \renewcommand{\theenumi}{(\roman{enumi})}
    \renewcommand{\labelenumi}{\theenumi}

    \item\label{item:span-distribution}
    $\mu_X$ is the distribution of $\spn{\ALG}$ after the elements of
    $X$ have been processed.

    \item\label{item:independence}
    $\ALG$ is always independent.

    \item\label{item:dii}
    If $|X|\geq k$, then $\mu_X$ satisfies the Dimension Invariant
    Inequality \eqref{eq:dimension-invariant}.

    \item\label{item:acceptance-probability}
    If $|X|=i>k$ and $e\in\OPT(X)$, then, conditional on $e$ arriving
    last among the elements of $X$, the algorithm accepts $e$ with
    probability $k/(i-1)$.
    \item\label{item:rationality}
All entries of $\mu_X$ and $p_X$ are rational.
\end{enumerate}
\end{lemma}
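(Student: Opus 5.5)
We argue by strong induction on $|X|$, proving all five properties simultaneously, with Lemma~\ref{lem:lp-feasible} as the engine that keeps the recursion well-defined. For $|X|\le k$ everything is immediate. The algorithm rejects all of $X$ (the acceptance step is guarded by $|X|>k$), so $\ALG=\emptyset$ and $\spn{\ALG}=\{0\}$, which is exactly the point mass $\mu_X$; this gives \ref{item:span-distribution}. $\ALG$ is trivially independent, all entries are $0$ or $1$, and \ref{item:acceptance-probability} is vacuous. For $|X|=k$ the Dimension Invariant Inequality reads $\dim(\{0\}\cap L)\le 0$, which holds, so \ref{item:dii} holds.

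For the inductive step, let $|X|=i>k$ and assume the statement for all sets of size $i-1$, in particular for every $X-e$.

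\emph{Well-definedness and \ref{item:dii}.} By induction, each $\mu_{X-e}$ satisfies \eqref{eq:dimension-invariant} because $|X-e|\ge k$. Lemma~\ref{lem:lp-feasible} then shows that $\mathrm{LP}(X)$ is feasible, so $p_X$ exists. The LP constraints $g_p(L)\le 0$, combined with Lemma~\ref{lem:dimension-update}, give \ref{item:dii} for $\mu_X$. Since $(\mu_Y,p_Y)$ is computed by a fixed rule from $Y$ alone, the table entry used at time $i$ does not depend on the arrival order within $X$.

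\emph{Rationality \ref{item:rationality}.} By induction the $\mu_{X-e}$ are rational, so $\mathrm{LP}(X)$ has rational coefficients: $k/(i-1)$, dimensions, and the masses $\mu_{X-e}(W)$. A feasible LP with rational data has a rational vertex solution, so the fixed rule picks a rational $p_X$, and then \eqref{eq:solution-distribution} yields a rational $\mu_X$. This also justifies that the acceptance coin flips can be implemented exactly.

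\emph{Distribution \ref{item:span-distribution}.} Condition on $X$ being the first $i$ arrivals. The last element $e$ is then uniform over $X$, and conditioning further on $e$ being last is the same as conditioning on $X-e$ being the first $i-1$ arrivals. By induction, $\spn{\ALG}$ just before $e$ arrives is distributed as $\mu_{X-e}$; here we use that the internal randomness used so far only involves quantities attached to subsets of $X-e$. Given $\spn{\ALG}=W$, the algorithm accepts $e$ with probability $p_X(e,W)$, after which the span becomes $W+\spn{e}$, and otherwise it stays $W$. Averaging over $e$ with weight $1/i$ reproduces \eqref{eq:solution-distribution} exactly, so the law of $\spn{\ALG}$ is $\mu_X$.

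\emph{Independence \ref{item:independence}.} The LP forces $p_X(e,W)=0$ whenever $e\in W$, so $e$ is only ever added when $v_e\notin\spn{\ALG}$. Together with the inductive hypothesis that $\ALG$ was independent before $e$ arrived, the new $\ALG$ is independent.

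\emph{Acceptance probability \ref{item:acceptance-probability}.} If $e\in\OPT(X)$ arrives last, the acceptance probability is $\E_{W\sim\mu_{X-e}}[p_X(e,W)]$, which equals $k/(i-1)$ by the third LP constraint.

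The only delicate point is the conditioning argument in \ref{item:span-distribution}. We must make sure that conditioning on $e$ being last, together with the prefix $X-e$, leaves the earlier internal randomness distributed as in the inductive hypothesis. This holds because the decision rule at each step depends only on the realized prefix set and $\spn{\ALG}$, through the order-independent table entries. All the substantive difficulty lies in Lemma~\ref{lem:lp-feasible}, which we take as given here.
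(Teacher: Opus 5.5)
Your proposal is correct and follows essentially the same simultaneous induction on $|X|$ as the paper. It has the same base case, the same appeal to Lemma~\ref{lem:lp-feasible} and Lemma~\ref{lem:dimension-update}, and the same conditioning-on-the-last-arrival argument for the distribution of $\spn{\ALG}$.

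The one detail the paper spells out that you leave implicit concerns $f$ other than the last arrival $a$: then $X-f$ is not a prefix of the arrival order. For such $f$, you need the order-independent fixed choice rule, which you do invoke. You also need the increasing-cardinality processing order in \textsc{UpdateDistributions}, which guarantees that $\mu_{X-f}$ has already been computed within the same call, so the induction hypothesis applies to it.
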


\begin{proof}
We prove the well-definedness of the algorithm and the five statements
simultaneously by induction on $|X|$.

If $|X|\leq k$, all elements belong to the sample and are rejected.
Hence $\spn{\ALG}=\{0\}$ and $\mu_X$ is the point mass at $\{0\}$. Moreover, $p_X=0$, so all entries of $\mu_X$ and $p_X$ are rational,
proving \ref{item:rationality}.
Thus, $\mu_X$ is the distribution of $\spn{\ALG}$, proving
\ref{item:span-distribution}, and $\ALG$ is independent, proving
\ref{item:independence}. When $|X|=k$,
\eqref{eq:dimension-invariant} holds with equality, proving
\ref{item:dii}.

Let $|X|=i>k$, and let $a$ be the last element of $X$ to arrive. When
\textsc{UpdateDistributions}$(X,a)$ processes $X$, every
$\mu_{X-f}$ with $f\in X$ has already been computed: if $f=a$, it was
computed before the arrival of $a$, while if $f\neq a$, the set $X-f$
has size $i-1$ and, since sets are processed in increasing order of
cardinality, it is processed earlier in the same call to
\textsc{UpdateDistributions}. For $f\neq a$, the set $X-f$ need not be the set of the first
$i-1$ arrivals. However, by the fixed choice rule,
$(\mu_{X-f},p_{X-f})$ depends only on $X-f$ and the data revealed for
its elements, so the induction hypothesis still applies. Hence, by
\ref{item:dii} and \ref{item:rationality}, each $\mu_{X-f}$ satisfies
the Dimension Invariant Inequality and has rational entries. Lemma~\ref{lem:lp-feasible} therefore implies that
$\mathrm{LP}(X)$ is feasible. Moreover, $\mathrm{LP}(X)$ has rational
coefficients, so it admits a rational feasible solution. By our fixed
choice rule, $p_X$ has rational entries, and
\eqref{eq:solution-distribution} then implies that $\mu_X$ has rational
entries. Thus, $p_X$ and $\mu_X$ are well-defined, and
\ref{item:rationality} holds.

Fix $e\in X$ and condition on $e$ being the last element to arrive
among the elements of $X$. The relative order of the elements in
$X-e$ is then uniformly random. Thus, by induction, immediately before
the arrival of $e$, $\spn{\ALG}$ has distribution $\mu_{X-e}$.

Now fix $W\in\supp(\mu_{X-e})$. Conditional on
$\spn{\ALG}=W$ immediately before the arrival of $e$, the algorithm
rejects $e$ with probability $1-p_X(e,W)$, in which case
$\spn{\ALG}$ remains equal to $W$, and accepts $e$ with probability
$p_X(e,W)$, in which case $\spn{\ALG}$ becomes $W+\spn{e}$.
Therefore, conditional on $e$ being the last element to arrive, the
probability that $\spn{\ALG}=U$ after processing $e$ is
\[
\sum_{W\in\supp(\mu_{X-e})}
\mu_{X-e}(W)
\brk{
(1-p_X(e,W))\mathbf{1}_{\{U=W\}}
+
p_X(e,W)\mathbf{1}_{\{U=W+\spn{e}\}}
}.
\]

Every element of $X$ is equally likely to arrive last. Averaging the
expression above over $e\in X$ gives exactly
\eqref{eq:solution-distribution} with $Y=X$. Hence $\mu_X$ is the
distribution of $\spn{\ALG}$ after processing the elements of $X$.
This proves \ref{item:span-distribution}.

We next prove that $\ALG$ remains independent. Let $e$ be the last
element of $X$ to arrive. By induction, $\ALG$ is
independent before $e$ arrives. If $e$ is accepted,
the second constraint of $\mathrm{LP}(X)$ implies that
$e\notin\spn{\ALG}$. Therefore, $\ALG+e$ is independent. This proves
\ref{item:independence}.

For the Dimension Invariant Inequality, the last constraint of
$\mathrm{LP}(X)$ gives
\[
    g_{p_X}(L)\leq 0
    \qquad \forall\,L\leq\spn{X}.
\]
By Lemma~\ref{lem:dimension-update}, $\mu_X$ satisfies
\eqref{eq:dimension-invariant} for every $L\leq\spn{X}$. As observed in
Section~\ref{sec:invariant}, this implies the inequality for every
subspace $L$. This proves \ref{item:dii}.

Finally, fix $e\in\OPT(X)$ and condition on $e$ being the last element
to arrive among the elements of $X$. The relative order of the
elements in $X-e$ is uniformly random. By induction, immediately
before the arrival of $e$, $\spn{\ALG}$ has distribution
$\mu_{X-e}$. When $\spn{\ALG}=W$, the algorithm accepts $e$ with
probability $p_X(e,W)$, using fresh randomness. Therefore,
\begin{equation}\label{eq:conditional}
    \Prb\brk{e\in\ALG \midd e\text{ arrives last among }X} = \E_{W\sim\mu_{X-e}}\brk{p_X(e,W)} = \frac{k}{i-1},
\end{equation}
where the last equality follows from the third constraint of
$\mathrm{LP}(X)$. This proves \ref{item:acceptance-probability}.

Observe that this does not mean that
$p_X(e,W)=k/(i-1)$ for every $W\in\supp(\mu_{X-e})$. The third
constraint only requires this equality after averaging over
$W\sim\mu_{X-e}$.
\end{proof}

\subsection{Proof of Theorem~\ref{thm:main}}
\label{sec:guarantee}

We are now ready to prove that every element of $\OPT(E)$ is selected
with probability at least $1/e$.

If $n=1$ or $n=2$, we do not run Algorithm~\ref{alg:online}. Instead,
we accept the first non-loop element that arrives. Every element of
$\OPT(E)$ is then selected with probability at least $1/n\geq 1/e$.

Hence assume that $n\geq3$. We run Algorithm~\ref{alg:online} with
sample size $k=\floor{n/e}$. Fix an element $e\in\OPT(E)$. For every
$X\subseteq E$ containing $e$, we have $e\in\OPT(X)$. Indeed, since
$e\in\OPT(E)$, it is not spanned by the elements of $E$ of greater
weight, and therefore it is not spanned by the elements of $X$ of
greater weight.

Fix $i>k$ and condition on $e$ arriving in position $i$ and on $X$
being the set of the first $i$ arrivals. Then $e$ arrives last among
the elements of $X$ and $e\in\OPT(X)$. By
Lemma~\ref{lem:algorithm-properties}\ref{item:acceptance-probability},
the conditional probability that $e$ is accepted is $k/(i-1)$. 
Since the value $k/(i-1)$ does not depend on $X$, we may drop the
conditioning on $X$. Thus, conditional on $e$ arriving in position
$i$, it is accepted with probability $k/(i-1)$. Since the arrival
position of $e$ is uniform,
\begin{equation}\label{eq:ratio}
    \Prb\brk{e\in\ALG}
    =
    \frac{1}{n}\sum_{i=k+1}^n\frac{k}{i-1}
    =
    \frac{k}{n}\sum_{j=k}^{n-1}\frac{1}{j}.
\end{equation}

Since $1/x$ is decreasing,
\[
    k\sum_{j=k}^{n-1}\frac{1}{j}
    =
    1+k\sum_{j=k+1}^{n-1}\frac{1}{j}
    \geq
    1+k\int_{k+1}^{n}\frac{1}{x}\dif x
    =
    1+k\ln\frac{n}{k+1}.
\]
It remains to show that the last expression is at least $n/e$.

 Since
$k=\floor{n/e}$, we have $ek\leq n<e(k+1)$. Let
$g(x)=1+k\ln(x/(k+1))-x/e$. On $[ek,e(k+1)]$ we have
$g'(x)=k/x-1/e\leq0$, while $g(e(k+1))=0$. Hence $g(n)\geq0$, and thus
$1+k\ln(n/(k+1))\geq n/e$. Combining this with
\eqref{eq:ratio} gives $\Prb\brk{e\in\ALG}\geq1/e$. This proves
Theorem~\ref{thm:main}.
\section{LP Feasibility via Uncrossing}
\label{sec:feasibility}

Throughout this section, fix a set $X$ with $|X|=i>k$, together with
a family of distributions
\[
    \boldsymbol{\mu}
    = \bigl(\mu_{X-e}\bigr)_{e\in X},
\]
where each $\mu_{X-e}$ satisfies the Dimension Invariant Inequality
\eqref{eq:dimension-invariant}.

Let $P=P(X,\boldsymbol{\mu})$ be the polytope obtained from
$\mathrm{LP}(X)$ by dropping the constraints $g_p(L)\leq 0$.
Since $X$ and $\boldsymbol{\mu}$ remain fixed throughout this section,
we simply write $P$.

We first show that $P$ is non-empty. We then show
that there exists a point $p\in P$ that also satisfies
$g_p(L)\leq 0$ for every $L\leq\spn{X}$. For each $e\in\OPT(X)$, the Dimension Invariant Inequality applied to
$\spn{e}$ gives
\begin{equation}\label{eq:enough-acceptance-mass}
    \Prb_{W\sim\mu_{X-e}}\brk{e\notin W}
    \geq \frac{k}{i-1}.
\end{equation}
For such an $e$, set $p(e,W)=0$ when $e\in W$, and otherwise set
\[
    p(e,W)
    =
    \frac{k}{
        (i-1)\Prb_{W'\sim\mu_{X-e}}\brk{e\notin W'}
    }.
\]
The inequality above ensures that $p(e,W)\leq 1$, while by construction
$\E_{W\sim\mu_{X-e}}\brk{p(e,W)}=k/(i-1)$. For
$e\notin\OPT(X)$, set $p(e,W)=0$ for every $W$. Thus $p\in P$, so
$P$ is non-empty.

It remains to find a point $p\in P$ such that $g_p(L)\leq 0$ for every
$L\leq\spn{X}$. We begin with a structural property of $g_p$.

Fix $p\in P$ and consider the following one-step experiment. Choose
$e\in X$ uniformly, sample $W$ according to $\mu_{X-e}$, and accept
$e$ with probability $p(e,W)$. Let $Z$ be the resulting subspace.
By Lemma~\ref{lem:dimension-update}, applied with $Y=X$, we have
\[
    g_p(L)
    =
    -(i-k)\dim(L)
    + i\,\E_p\brk{\dim(Z\cap L)}.
\]

We claim that $g_p$ is supermodular on the lattice of subspaces.
Namely, for any $L,H\leq\spn{X}$,
\[
    g_p(L)+g_p(H)
    \leq
    g_p(L\cap H)+g_p(L+H).
\]
Indeed, for every fixed subspace $Z$,
\[
    \dim(Z\cap L)+\dim(Z\cap H)
    =
    \dim((Z\cap L)+(Z\cap H))
    +\dim(Z\cap L\cap H).
\]
Together with
\[
    (Z\cap L)+(Z\cap H)\leq Z\cap(L+H),
\]
this shows that $L\mapsto\dim(Z\cap L)$ is supermodular. Taking
expectations preserves supermodularity, while $\dim(L)$ is modular.
Therefore, $g_p$ is supermodular.

We now use this supermodularity to reduce the problem to chains of
subspaces.
\begin{lemma}\label{lem:chains}
Suppose that for every chain
\[
    \mathcal{C}=\{L_1,\dots,L_s\},
    \qquad
    L_1 < \cdots < L_s \leq \spn{X},
\]
there exists a $p\in P$ such that $g_p(L_j)\leq 0$ for every
$j\in\{1,\dots,s\}$. Then there exists a single $p\in P$ such that
$g_p(L)\leq 0$ for every $L\leq\spn{X}$.
\end{lemma}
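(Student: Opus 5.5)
We argue by contradiction via LP duality, using the supermodularity of $g_p$ just established to uncross the support of a dual certificate into a chain. The key fact is that $g_p(L)$ is affine in $p$: $u_e(L)$ and $\dim(L)$ do not depend on $p$, and the last sum in \eqref{eq:violation} is linear in the variables $p(e,W)$. So the set of $p\in P$ with $g_p(L)\le 0$ for all $L\le\spn{X}$ is a polytope cut out by finitely many linear inequalities, since $\spn{X}$ has finitely many subspaces over a finite field.

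Suppose no such $p$ exists. Since $P$ is a nonempty compact polytope, we apply a minimax or Farkas-type argument to the bilinear function $(p,\lambda)\mapsto\sum_L\lambda_L g_p(L)$ over $P$ times the simplex of nonnegative weights $\lambda$ on subspaces. By von Neumann's minimax theorem, infeasibility gives weights $\lambda\ge 0$ with $\sum_L\lambda_L=1$ and
\[
\min_{p\in P}\sum_{L}\lambda_L g_p(L)>0 .
\]
Call such a $\lambda$ a certificate. Among all certificates with strict positivity, we pick one that is extremal for an uncrossing potential. The standard choice maximizes $\sum_L\lambda_L\dim(L)^2$, or equivalently lexicographically favors chains. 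This maximum is attained because the strict inequality defines an open condition, so we first fix the value $\delta>0$ of the min and work with $\min_p\ge\delta$, a closed condition.

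\textbf{Uncrossing step.} Suppose $L,H$ in the support of $\lambda$ are incomparable, and let $\epsilon=\min(\lambda_L,\lambda_H)$. Form $\lambda'$ by subtracting $\epsilon$ from $\lambda_L$ and $\lambda_H$ and adding $\epsilon$ to $\lambda_{L\cap H}$ and $\lambda_{L+H}$. Supermodularity gives, for each fixed $p$,
\[
\sum\lambda'g_p\ \ge\ \sum\lambda g_p ,
\]
so $\min_p\sum\lambda' g_p\ge\delta$ and $\lambda'$ is still a certificate. Since $\dim(L\cap H)+\dim(L+H)=\dim L+\dim H$ and $L,H$ are incomparable, $x^2$ strictly increases the potential, by strict convexity of $x^2$ because the dimensions spread apart. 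This contradicts maximality. Hence the support of the extremal $\lambda$ is a chain $\mathcal C$.

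By hypothesis there is $p\in P$ with $g_p(L_j)\le0$ for all $L_j\in\mathcal C$. Then $\sum\lambda_L g_p(L)\le 0<\delta$, a contradiction.

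The main obstacle is handling the extremality argument cleanly, because the min over $p$ is not linear in $\lambda$. What we need is that the set of $\lambda$ in the simplex with $\min_p\sum\lambda g_p\ge\delta$ is closed; it is in fact convex, since a minimum of linear functions is concave. Compactness then gives a maximizer of the potential. Alternatively, we can avoid this by a finite iteration: each uncrossing step strictly increases the potential, and we take an LP-vertex dual certificate with rational data. I would use the compactness route.
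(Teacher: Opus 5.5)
Your proposal is correct and follows the paper's proof. The steps match: apply minimax over $P\times\Lambda$, pick a certificate maximizing $\sum_L\lambda_L\dim(L)^2$, uncross incomparable pairs using supermodularity of $g_p$ so the support becomes a chain, then apply the hypothesis. The only difference is cosmetic: you argue by contradiction and maximize the potential over the closed superlevel set $\{\lambda:\min_{p\in P}\sum_L\lambda_L g_p(L)\ge\delta\}$, whereas the paper maximizes it over the set of minimax-optimal $\lambda$ and concludes $v\le 0$ directly.
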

\begin{proof}
Since the field is finite, $\spn{X}$ has finitely many subspaces. Let $\Lambda$ be the probability simplex on these subspaces. By the von Neumann minimax theorem,
\begin{equation}\label{eq:dual}
    v \coloneqq \min_{p \in P} \max_{L \leq \spn{X}} g_p(L) = \max_{\lambda \in \Lambda} \min_{ p\in P} \sum_L \lambda_L g_p(L).
\end{equation}
Thus, the existence of a $p \in P$ such that $g_p(L) \leq 0$ for every $L \leq \spn{X}$ is equivalent to $v \leq 0$.

Among the maximizers on the right-hand side of \eqref{eq:dual}, choose one that maximizes $\sum_L \lambda_L (\dim(L))^2$. Suppose that two incomparable subspaces $L, H$ have positive mass in $\lambda$, and let $t = \min\set{\lambda_L, \lambda_H}$. Move mass $t$ from each of $L$ and $H$ to $L \cap H$ and $L + H$ and let $\lambda'$ denote the resulting point in $\Lambda$. By the supermodularity of $g_p$, we have $g_p(L) + g_p(H) \leq g_p(L \cap H) + g_p(L + H)$ for every $p \in P$. Hence the weighted sum does not decrease for any $p$, and neither does its minimum over $P$. Since the original $\lambda$ is optimal, the new point $\lambda'$ is also optimal.

On the other hand, $\sum_L \lambda_L (\dim(L))^2$ increases by $2t (\dim(L) - \dim(L \cap H)) (\dim(H) - \dim(L \cap H)) > 0$, a contradiction. Therefore, the chosen optimal $\lambda$ is supported on a chain $\mathcal{C}$. By the hypothesis of the lemma, there exists a $p \in P$ with $g_p(L) \leq 0$ for every $L \in \mathcal{C}$. Thus, $v \leq \sum_L \lambda_L g_p(L) \leq 0$.
\end{proof}

By Lemma~\ref{lem:chains}, it is enough to fix an arbitrary chain $
    L_1 < \cdots < L_s \leq \spn{X}$
and construct a point $p\in P$ such that $g_p(L_j)\leq 0$ for every
$j\in\{1,\dots,s\}$.

Fix $e\in\OPT(X)$ and let
\[
    \mathcal{F}_e
    =
    \set{W\in\supp(\mu_{X-e})\midd e\notin W}.
\]
These are precisely the states in which $e$ can be accepted.

\begin{lemma}\label{lem:delta-equivalence}
For every $W\in\mathcal{F}_e$ and every subspace $L$, we have
\[
    \Delta_e(W,L)=\1{e\in W+L}.
\]
\end{lemma}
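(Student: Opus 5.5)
The plan is to compute $\Delta_e(W,L)$ with the modular law for subspace dimensions, applied twice. Write $v=v_e$, so $\spn{e}=\spn{v}$ has dimension one. Since $W\in\mathcal{F}_e$, we have $e\notin W$, and hence $\dim(W+\spn{e})=\dim(W)+1$.

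First, express each intersection dimension through sums. By the modular law,
\[
\dim((W+\spn{e})\cap L)=\dim(W+\spn{e})+\dim(L)-\dim(W+\spn{e}+L).
\]
Similarly, $\dim(W\cap L)=\dim(W)+\dim(L)-\dim(W+L)$. Subtracting these gives
\[
\Delta_e(W,L)=1-\bigl(\dim(W+L+\spn{e})-\dim(W+L)\bigr).
\]

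Second, the bracketed term equals $\dim(W+L+\spn{e})-\dim(W+L)$, which is $0$ if $v\in W+L$ and $1$ otherwise. This is because adding one vector to a subspace raises its dimension by at most one, and by exactly one if and only if the vector lies outside the subspace. Therefore $\Delta_e(W,L)=\1{e\in W+L}$, where "$e\in W+L$" means $v_e\in W+L$, consistent with the paper's identification of elements with their vectors.

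The only point needing care is the use of $e\notin W$ to obtain the $+1$ in the first step. This is exactly what membership in $\mathcal{F}_e$ provides. For $W$ with $e\in W$ the formula would fail, since then $\Delta_e=0$ while $e\in W+L$ holds. I do not expect any real obstacle: the argument is a two-line dimension count.
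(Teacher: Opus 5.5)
Your proposal is correct and follows essentially the same argument as the paper. You apply the dimension formula $\dim(A\cap B)=\dim(A)+\dim(B)-\dim(A+B)$ to $(W+\spn{e},L)$ and $(W,L)$, use $e\notin W$ to get the $+1$, and read off $1-\1{e\notin W+L}=\1{e\in W+L}$.
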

\begin{proof}
Recall that $\Delta_e(W, L) = \dim( (W + \spn{e}) \cap L) - \dim(W \cap L)$. Applying the equality $\dim(A \cap B) = \dim(A) + \dim(B) - \dim(A + B)$ to this expression twice, once for $(A, B) = (W + \spn{e}, L)$ and the other for $(A, B) = (W, L)$, yields
\begin{align*}
    \dim( (W + \spn{e}) \cap L) - \dim(W \cap L) &= \prn{\dim(W + \spn{e}) + \dim(L) - \dim(W + \spn{e} + L)} \\
    &- \prn{\dim(W) + \dim(L) - \dim(W + L)} \\
    &= 1 - \prn{\dim(W + \spn{e} + L) - \dim(W + L)} \\
    &= 1 - \1{e \notin W + L} \\
    &= \1{e \in W + L},
\end{align*}
where the second equality follows from the fact that, since $W \in \mathcal{F}_e$, we have $e \notin W$ and thus $\dim(W + \spn{e}) - \dim(W) = 1$.
\end{proof}

For each $j$, let
\[
    H_j
    =
    \set{W\in\mathcal{F}_e\midd e\notin W+L_j}.
\]
Since $L_1<\cdots<L_s$, we have
$H_s\subseteq\cdots\subseteq H_1$. Order the states in
$\mathcal{F}_e$ as $W_1,\dots,W_t$ so that every $H_j$ is an initial
segment: first the states in $H_s$, then those in
$H_{s-1}\setminus H_s$, and so on. For each $W_\ell$, let
$a_\ell=\mu_{X-e}(W_\ell)$ and assign acceptance probability greedily
according to this order:
\begin{equation}\label{eq:chain-solution}
    p(e,W_\ell)
    =
    \min\set{
        1,
        \frac{
            \brk{\frac{k}{i-1}-\sum_{h<\ell}a_h}^{+}
        }{a_\ell}
    },
\end{equation}
where $[x]^+\coloneqq\max\set{x,0}$. For the remaining states $W$, set
$p(e,W)=0$.

Recall that $e\in\OPT(X)$ is fixed throughout this construction. By \eqref{eq:enough-acceptance-mass} and the definition of
$\mathcal{F}_e$,
\[
    \mu_{X-e}(\mathcal{F}_e)
    =
    \Prb_{W\sim\mu_{X-e}}\brk{e\notin W}
    \geq \frac{k}{i-1}.
\]
Hence the greedy assignment in \eqref{eq:chain-solution} assigns total
acceptance probability exactly $k/(i-1)$ to $e$.

We perform the construction above separately for each $e\in\OPT(X)$,
and set $p(e,W)=0$ for every $e\notin\OPT(X)$. This defines all the
coordinates of a point $p$. Since, for each $e\in\OPT(X)$, the
construction assigns total acceptance probability $k/(i-1)$ and sets
$p(e,W)=0$ whenever $e\in W$, we have $p\in P$.

Now fix $e\in\OPT(X)$. Since every $H_j$ is an initial segment,
\begin{equation}\label{eq:initial-segment-mass}
    \sum_{W \in H_j} \mu_{X - e}(W) p(e, W)
    =
    \min\set{\frac{k}{i - 1}, \mu_{X - e}(H_j)}.
\end{equation}

For every $L$ in the chain, let $q_e^p(L) = \E_{W \sim \mu_{X - e}}\brk{p(e,W) \Delta_e(W, L)}$. Notice that

\begin{align}
    q_e^p(L) &= \sum_{W \in \mathcal{F}_e} \mu_{X - e}(W) p(e, W) \Delta_e(W, L) \nonumber \\
    &= \sum_{W \in \mathcal{F}_e : e \in W + L} \mu_{X - e}(W) p(e, W) \nonumber \\
    &= \sum_{W \in \mathcal{F}_e}\mu_{X - e}(W) p(e, W)
    - \sum_{W \in \mathcal{F}_e : e \notin W + L}\mu_{X - e}(W) p(e, W) \nonumber \\
    &= \frac{k}{i - 1}
    - \min\set{\frac{k}{i-1},
    \mu_{X - e}\prn{\set{W \in \mathcal{F}_e \midd e \notin W + L}} } \nonumber \\
    &= \prn{\frac{k}{i - 1}
    - \mu_{X - e}\prn{\set{W \in \mathcal{F}_e \midd e \notin W + L}} }^+ \nonumber \\
    \label{eq:qepl}
    &= \prn{\frac{k}{i - 1}
    - \Prb_{W \sim \mu_{X - e}} \brk{e \notin W + L}}^+,
\end{align}
where the second equality follows from Lemma~\ref{lem:delta-equivalence},
and the fourth follows from \eqref{eq:initial-segment-mass}.

Let $d=\dim(L)$ and $y = u_e(L) + k/(i - 1) - \Prb_{W \sim \mu_{X - e}} \brk{e \notin W + L}$. We distinguish between the cases $e \notin L$ and $e \in L$. Suppose first that $e\notin L$. Then,
we have
\begin{align}
    y &= u_e(L) + \frac{k}{i - 1} - 1 + 1 - \Prb_{W \sim \mu_{X - e}} \brk{e \notin W + L} \nonumber \\
    &= u_e(L) + \frac{k}{i - 1} - 1 + \Prb_{W \sim \mu_{X - e}} \brk{e \in W + L} \nonumber \\
    &= u_e(L) - \prn{1 - \frac{k}{i - 1}} + \E_{W \sim \mu_{X - e}} \brk{\dim(W \cap (L + \spn{e})) - \dim(W \cap L)} \nonumber \\
    &= u_e(L) - \prn{1 - \frac{k}{i - 1}} + u_e(L + \spn{e}) - u_e(L) \nonumber \\
    &= u_e(L + \spn{e}) - \prn{1 - \frac{k}{i - 1}}.\label{eq:larger-dim}
\end{align}
Thus, using \eqref{eq:qepl}, \eqref{eq:larger-dim} and the fact that $x + \brk{y - x}^+ = \max\set{x,y}$ for $x = u_e(L)$, we have
\begin{align}
    u_e(L) + q_e^p(L) &= u_e(L) + \brk{\frac{k}{i - 1} - \Prb_{W \sim \mu_{X - e}} \brk{e \notin W + L}}^+ \nonumber \\
     &= \max\set{u_e(L), u_e(L + \spn{e}) - \prn{1 - \frac{k}{i - 1}}}.\label{eq:slack-bound}
\end{align}
The Dimension Invariant Inequality \eqref{eq:dimension-invariant} for $X - e$, when applied to $L$ and $L + \spn{e}$, gives
\[
    u_e(L) \leq \prn{1 - \frac{k}{i - 1}} d \quad \text{ and } \quad u_e(L + \spn{e}) \leq \prn{1 - \frac{k}{i - 1}} \dim(L + \spn{e}) = \prn{1 - \frac{k}{i - 1}} (d + 1).
\]
Combining this with \eqref{eq:slack-bound} yields
\begin{equation}\label{eq:max}
    u_e(L) + q_e^p(L) = \max\set{u_e(L), u_e(L + \spn{e}) - \prn{1 - \frac{k}{i - 1}}} \leq \prn{1 - \frac{k}{i - 1}} d.
\end{equation}

If, on the other hand, $e \in L$, every feasible acceptance increases $\dim(\spn{\ALG} \cap L)$ by exactly one. Since the construction assigns total acceptance probability $k/(i-1)$ to $e$, we have $q_e^p(L)=k/(i-1)$. This, together with the bound on $u_e(L)$ above and \eqref{eq:max}, gives, in both cases,
\[
    u_e(L) + q_e^p(L) \leq \prn{1 - \frac{k}{i - 1}} d + \frac{k}{i - 1} \cdot \1{e \in L}.
\]

For $e\notin\OPT(X)$, we have $p(e,W)=0$ for every $W$, and the Dimension Invariant Inequality gives
\[
    u_e(L) \leq \prn{1 - \frac{k}{i - 1}} d.
\]

Since $\OPT(X)$ is independent, $|\OPT(X) \cap L| \leq d$. Summing these bounds over all elements in $X$ gives
\begin{align*}
    \sum_{e \in X} u_e(L) + \sum_{e' \in \OPT(X)} q_{e'}^p(L) &\leq i \; \prn{1 - \frac{k}{i - 1}} d + \frac{k}{i - 1} \cdot |\OPT(X) \cap L| \\
    &\leq \prn{i \; \prn{1 - \frac{k}{i - 1}} + \frac{k}{i - 1}} d \\
    &= (i - k) d.
\end{align*}
Therefore $g_p(L) \leq 0$ for every $L$ in the chain. Lemma~\ref{lem:chains} gives a single $p \in P$ satisfying all the constraints of $\mathrm{LP}(X)$. This proves
Lemma~\ref{lem:lp-feasible}.
\section{Beyond Linear Matroids: Finitary Modular Extensions}
\label{sec:extensions}

We now turn to the known-matroid model. For linear matroids, the result follows from Theorem~\ref{thm:main}, since every finite linear matroid admits a representation over a finite field. We then show that the same guarantee holds more generally for matroids admitting a finitary modular extension.

\begin{corollary}\label{cor:known-linear}
In the known-matroid model, every linear matroid admits an ordinal $1/e$-probability-competitive algorithm.
\end{corollary}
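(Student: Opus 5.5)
The plan is to reduce Corollary~\ref{cor:known-linear} to Theorem~\ref{thm:main} by simulating the online-representation model inside the known-matroid model. Let $M$ be a linear matroid on ground set $E$ with $|E|=n$, known in advance. By Rado's theorem~\cite{Rado1957}, $M$ has a representation over some finite field $\bF$. Since we ignore computational complexity, the algorithm fixes such a field and representation $\{v_e\colon e\in E\}$ before any arrival. It can do this by enumerating finite fields and candidate vector assignments, and checking for each subset $I\subseteq E$ whether $\{v_e\colon e\in I\}$ is linearly independent exactly when $I$ is independent in $M$. The search terminates because a representation exists.

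Next, we run Algorithm~\ref{alg:online} with $n$, the field $\bF$, and sample size $k=\floor{n/e}$; when $n\le 2$ we use the trivial rule from Section~\ref{sec:guarantee} instead. Whenever an element $e$ arrives, we feed the algorithm the precomputed vector $v_e$ together with the relative rank of $e$ among the elements seen so far. This input stream is distributed exactly as in the online-representation model: the arrival order is uniformly random, the weights are the same, and the vectors form a representation of $M$ over a finite field in a common coordinate system. Hence every conclusion of Theorem~\ref{thm:main} transfers directly:
\begin{itemize}
  \item the selected set is independent in the representation, hence in $M$;
  \item each $e\in\OPT(E)$ is selected with probability at least $1/e$;
  \item the algorithm uses weights only through comparisons, so it is ordinal.
\end{itemize}

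The one point to verify is that $\OPT$ computed from the vectors coincides with $\OPT$ in $M$. This holds because both are determined by the greedy algorithm using the same independence oracle, since the representation preserves independence. We also need the fixed choice rule for LP solutions to depend only on revealed data, which the simulation preserves. We therefore expect no real obstacle. The only mildly delicate step is justifying the existence and computability of a finite-field representation, and Rado's theorem together with brute-force search handles this because no efficiency is claimed.
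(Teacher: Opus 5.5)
Your proposal is correct and follows the paper's own proof. Both use Rado's theorem to obtain a finite-field representation, find it by unbounded brute-force search against the known independence structure, and then invoke Theorem~\ref{thm:main}. Your additional checks are sound and only make explicit what the paper leaves implicit: that the simulated input stream matches the online-representation model, and that $\OPT$ in the representation agrees with $\OPT$ in $M$.
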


\begin{proof}
By a theorem of Rado \cite{Rado1957}, every finite matroid representable over some field is representable over a finite field. Since the matroid is known and we allow unbounded computational power, we may enumerate finite fields and matrices with columns indexed by $E$, checking each candidate against the independence structure of the matroid, until we find such a representation. We do not require this search to be efficient. Once a finite-field representation is fixed, Theorem~\ref{thm:main} applies.
\end{proof}

Thus the representation need not be supplied as part of the input in the known-matroid model. In particular, the corollary applies to every finite linear matroid, regardless of the field over which it is initially known to be representable.

For a matroid $N$, we write $\clo_N$ for its \textit{closure operator}. A set $F$ is a \textit{flat} if $\clo_N(F)=F$. For two flats $F_1,F_2$, their \textit{meet} is $F_1\cap F_2$ and their \textit{join} is $F_1\vee F_2\coloneqq \clo_N(F_1\cup F_2)$. The pair $F_1,F_2$ is \textit{modular} if $r_N(F_1) + r_N(F_2) = r_N(F_1\cap F_2) + r_N(F_1\vee F_2)$. The matroid is \textit{modular} if every pair of its flats is modular. A possibly infinite matroid is \textit{finitary} if every circuit is finite. We say that a finite matroid $M$ on $E$ admits a \textit{finitary modular extension} if there is a finitary modular matroid $M'$ on a ground set containing $E$ such that $M'|E = M$.

The proof of Theorem~\ref{thm:main} in fact uses less than a vector-space representation. The arguments only require the modular rank identities for flats. This yields the following more general consequence.

\begin{corollary}\label{cor:modular-extension}
Let $M$ be a known matroid admitting a finitary modular extension. Then $M$ admits an ordinal $1/e$-probability-competitive algorithm.
\end{corollary}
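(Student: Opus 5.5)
Fix a finitary modular matroid $M'$ on a ground set $E'\supseteq E$ with $M'|E=M$. The plan is to rerun Algorithm~\ref{alg:online} and its analysis with the subspace lattice of $\bF^d$ replaced by a finite sublattice of flats of $M'$, and to check that every step used only lattice identities that modularity provides. Concretely, $\spn{S}$ becomes $\clo_{M'}(S)$, $\dim$ becomes $r_{M'}$, $W\cap L$ stays set intersection of flats, and $W+L$ becomes the join $W\vee L$. Since $M'$ is finitary, $r_{M'}(\clo_{M'}(E))=r_M(E)$ is finite. This bounds the height of the lattice of flats of $M'$ contained in $\clo_{M'}(E)$.

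\textbf{Finiteness.} The first task is to make the index set of $\mathrm{LP}(Y)$ finite. The states $W$ that arise are always closures of subsets of $E$ (closures of $\ALG$), so there are finitely many of them. The constraints $g_p(L)\le 0$, however, range over all flats $L\le\clo(Y)$, and there may be infinitely many of these. I would restrict $L$ to the sublattice $\mathcal{L}$ generated by the flats $\clo_{M'}(S)$, $S\subseteq E$, under meet and join. In a modular lattice of finite height, a sublattice generated by finitely many elements need not be finite in general. So I would use this instead: the Dimension Invariant Inequality is only ever applied at $L=\spn{e}$, at $L$, and at $L+\spn{e}$ in the feasibility argument. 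Chains and uncrossing also stay inside whatever meet/join-closed family we fix. If finiteness of $\mathcal{L}$ fails, I would replace von Neumann's minimax by a compactness argument: $P$ is a compact polytope and each $g_p(L)$ is affine in $p$. An infinite family of closed halfspaces meets $P$ as soon as every finite subfamily does (the finite intersection property), and any finite family of flats lies in a finite sublattice of closures of finitely many sets. Either way we reduce to a finite lattice, and this is the step I expect to be the main obstacle.

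\textbf{Checking the identities.} Next I would verify each identity that Sections~\ref{sec:invariant}--\ref{sec:feasibility} used.
\begin{enumerate}
\item Lemma~\ref{lem:dimension-update} is pure bookkeeping with $\dim$ replaced by $r$.
\item Supermodularity of $L\mapsto r(Z\cap L)$ needs $r(Z\cap L)+r(Z\cap H)=r((Z\cap L)\vee(Z\cap H))+r(Z\cap L\cap H)$, which is modularity of that pair of flats. It also needs monotonicity $(Z\cap L)\vee(Z\cap H)\le Z\cap(L\vee H)$, which holds in any lattice.
\item The uncrossing in Lemma~\ref{lem:chains} needs $r(L)-r(L\cap H)>0$ for incomparable $L,H$, which holds since $r$ is strictly increasing on flats.
\item Lemma~\ref{lem:delta-equivalence} uses $r(A\cap B)=r(A)+r(B)-r(A\vee B)$, again modularity. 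It also uses $r(W\vee\clo(e))=r(W)+1$ when $e\notin W$, which holds for any non-loop.
\item The identity $\Prb[e\in W+L]=u_e(L\vee\clo(e))-u_e(L)$ follows from Lemma~\ref{lem:delta-equivalence} with the roles of $W$ and $L$ exchanged.
\item The final count $|\OPT(X)\cap L|\le r(L)$ holds because $\OPT(X)$ is independent.
\end{enumerate}

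\textbf{Probabilistic part.} The remaining arguments are unchanged: Lemma~\ref{lem:algorithm-properties} and the computation in Section~\ref{sec:guarantee} never use linearity. Since the matroid is known, the algorithm can fix $M'$, or equivalently the finite lattice data, in advance with unbounded computation. The ordinal $1/e$ guarantee then follows exactly as in the proof of Theorem~\ref{thm:main}.
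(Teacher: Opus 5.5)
\textbf{The gap is in the finiteness step.} Your treatment of the rank identities is fine. But your fallback claim, that any finite family of flats lies in a finite sublattice ``of closures of finitely many sets,'' is false, and it contradicts your own correct remark a few lines earlier.
\begin{itemize}
\item The family $\set{\clo_{M'}(S)\colon S\subseteq T}$ with $T$ finite is closed under joins but not under meets: two lines spanned by points of $T$ meet in a point outside $T$.
\item If you instead use the meet of the finite matroid $M'|T$, that lattice is no longer modular. Four points in general position in a projective plane restrict to $U_{3,4}$. Then neither the supermodularity of $g_p$ nor the strict increase of $\sum_L\lambda_L r(L)^2$ survives.
\item The sublattice generated by finitely many flats can be infinite. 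Take $M'$ to be the matroid of $\mathbb{Q}^3$ and $E$ four points in general position. The sublattice generated by these four atoms contains every rational point of the plane.
\end{itemize}
So after your compactness reduction you face a finite family of constraints that is not closed under $\cap$ and $\vee$. Lemma~\ref{lem:chains} cannot be run there, because uncrossing moves mass outside the family. On the infinite generated sublattice you have neither a finite simplex for von Neumann's theorem nor any reason that a potential-maximizing optimal $\lambda$ exists.

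\textbf{The missing idea} is that there are only finitely many distinct constraints, even though there may be infinitely many flats. Every state used in $\mathrm{LP}(X)$, both a predecessor $W$ and its successor $\clo_{M'}(W\cup\{e\})$, lies in the finite set $\mathcal{S}_X=\set{\clo_{M'}(Z)\colon Z\subseteq X}$. Hence $g_p(L)$ depends on $L$ only through the signature $\sigma_X(L)=(r_{M'}(L),(r_{M'}(W\cap L))_{W\in\mathcal{S}_X})$, which takes finitely many integer values. The argument then runs as follows.
\begin{itemize}
\item Fix one representative flat per signature and apply minimax on the simplex over the representatives.
\item Uncross two incomparable representatives $L,H$ by moving mass to the representatives of $\sigma_X(L\cap H)$ and $\sigma_X(L\vee H)$.
\item Supermodularity, applied to the actual flats $L\cap H$ and $L\vee H$, keeps the objective from decreasing. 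Since rank is a coordinate of the signature, modularity gives the same strict increase of $\sum_L\lambda_L r_{M'}(L)^2$.
\end{itemize}
This yields a chain of representatives, to which the chain construction applies unchanged, and no compactness is needed.

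\textbf{A smaller point.} ``The algorithm can fix $M'$ with unbounded computation'' is not justified, since $M'$ may be infinite and cannot simply be found by search. Either say explicitly that you only claim existence of a policy for each $M$, or argue as the paper does. The $M'$-based policy induces a distribution over the finitely many deterministic ordinal policies for $M$. So the finite LP over such distributions, whose data depend only on $M$, has value at least $1/e$.
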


\begin{proof}
Suppose first that an extension $M'$ is fixed and available to the
algorithm. We work with the flats of $M'$ contained in $\clo_{M'}(E)$, ordered by inclusion. Since $E$ is finite, each such flat has rank at most $r_{M'}(E)<\infty$.

The construction of Section~\ref{sec:algorithm} carries over by replacing
vector spans with closures in $M'$, sums of subspaces with joins of flats,
and dimension with $r_{M'}$. Thus, the state corresponding to an accepted
set $A$ is $\clo_{M'}(A)$, the initial state is
$\clo_{M'}(\emptyset)$, and the Dimension Invariant Inequality becomes
\[
    \E_{W\sim\mu_X} \brk{r_{M'}(W \cap L)} \leq \prn{1 - \frac{k}{|X|}} r_{M'}(L)
\]
for every flat $L \subseteq \clo_{M'}(E)$.

The rank identities in the feasibility proof remain valid in this setting. Indeed, modularity gives $r_{M'}(F_1)+r_{M'}(F_2)=r_{M'}(F_1\cap F_2)+r_{M'}(F_1\vee F_2)$ for any two flats $F_1,F_2\subseteq \clo_{M'}(E)$. In particular, for every fixed flat $W$, the function assigning $r_{M'}(W \cap F)$ to a flat $F$ is supermodular, since $(W \cap F_1) \vee (W \cap F_2) \subseteq W \cap (F_1 \vee F_2)$. Thus, the supermodularity and uncrossing arguments from Section~\ref{sec:feasibility} continue to hold. For the fixed-chain construction, let $L$ be a flat in the fixed chain. If $e\notin W$, modularity gives $r_{M'}(\clo_{M'}(W\cup\{e\})\cap L)-r_{M'}(W\cap L) = \1{e\in\clo_{M'}(W\cup L)}$, and, if $e\notin L$, it also gives $r_{M'}(W\cap \clo_{M'}(L\cup\{e\}))-r_{M'}(W\cap L) = \1{e\in\clo_{M'}(W\cup L)}$. Moreover, since $\OPT(X)$ is independent in $M = M'|E$, we have $|\OPT(X) \cap L|\le r_{M'}(L)$. These are exactly the properties used in the fixed-chain construction and in the proof of~\eqref{eq:max}. Hence the construction and its proof carry over with closures and rank in place of spans and dimension.

There is, however, one difference from the finite-field setting. There may be infinitely many flats contained in $\clo_{M'}(E)$, so we must check that the acceptance LP still has only finitely many essential constraints. For a fixed observed set $X$, every state that can occur is of the form $\clo_{M'}(Z)$ for some $Z \subseteq X$.
Hence the set
\[
    \mathcal{S}_X = \set{\clo_{M'}(Z)\colon Z\subseteq X}
\]
is finite. For a flat $L\subseteq\clo_{M'}(E)$, define its signature relative to $X$ by
\[
    \sigma_X(L) = \prn{r_{M'}(L), \prn{r_{M'}(W \cap L)}_{W \in \mathcal{S}_X} }.
\]
Every quantity appearing in the constraint $g_p(L) \leq 0$ is determined by this signature. Indeed, both a predecessor state $W$ and the state obtained after accepting an element belong to $\mathcal S_X$. Since all ranks in $\sigma_X(L)$ are integers between $0$ and $r_{M'}(E)$, only finitely many signatures can occur. Thus there are only finitely many distinct constraints.

The chain-reduction argument can now be applied to these finitely many constraint types. Choose one representative flat for each signature. If two representatives carrying positive mass in an optimal minimax solution are incomparable, move their mass, exactly as in the proof of Lemma~\ref{lem:chains}, to the chosen representatives of the signatures of their intersection and join. Flats with the same signature define the same constraint and have the same rank, so supermodularity ensures that the minimax objective does not decrease, while modularity gives the same strict increase in the secondary objective. Therefore an optimal solution can again be supported on a chain. The construction for a fixed chain uses only the Dimension Invariant Inequality and the properties established above, so the remainder of the proof of Lemma~\ref{lem:lp-feasible} carries over unchanged. We therefore obtain an ordinal randomized policy with the same $1/e$ probability guarantee as in Theorem~\ref{thm:main}.

This policy was constructed using the extension $M'$. Finally, we show that the extension need not be given to the algorithm. Since $M$ is finite, there are only finitely many deterministic ordinal online policies that always maintain independence in $M$. A history is determined by the ordered sequence of labels seen so far, their relative weight order, and the previous accept/reject decisions. Fixing all random choices of the policy constructed above in advance therefore induces a probability distribution over this finite set of deterministic policies.

We can recover a suitable distribution using only the known matroid $M$. Consider the finite LP that chooses a probability distribution over deterministic ordinal policies and maximizes a variable $t$. For every total order on $E$ and every element of the corresponding optimal basis, require that its selection probability, averaged over all arrival orders, be at least $t$. All coefficients of this LP are determined by $M$. The randomized policy constructed from $M'$ provides a feasible distribution with $t \geq 1/e$, so the optimum of the LP is at least $1/e$. We may therefore solve this LP before the arrivals, sample a deterministic policy from the resulting distribution, and follow it online. The resulting algorithm depends only on the known matroid $M$ and has the required probability guarantee.
\end{proof}

For finite matroids, admitting a finitary modular extension is equivalent to being fully modular extendable. The nontrivial direction follows from \cite[Lemma~3.2]{BercziEtAl}. Hence Corollary~\ref{cor:modular-extension} applies precisely to fully modular extendable matroids. By \cite[Theorem~3.4]{BercziEtAl}, these are exactly the matroids whose connected components either have rank $3$ or are skew-representable. In particular, the class covered by Corollary~\ref{cor:modular-extension} strictly contains linear matroids; the rank-$3$ non-Pappus and non-Desargues matroids are not linear, but are fully modular extendable and hence are covered by the corollary. By contrast, the rank-$4$ V\'amos matroid is not skew-representable and therefore does not admit a finitary modular extension.

\medskip

\medskip

\paragraph{AI Disclosure.}

The authors have been working on some of the ideas in this paper since 2024. In particular, the idea of assigning acceptance probabilities equal to $k/(i-1)$ in a correlated way predates the use of generative AI in this project, while the formulation of these probabilities through a linear program was motivated by the approach of Buchbinder et al.~\cite{BuchbinderJS14}. While trying to prove the statement for binary matroids, the authors discussed the problem with ChatGPT-6 Astra on September 14--15, 2026. On September 15, 2026 (1:02\,AM PDT), Astra identified the supermodularity of $g_p$ and proposed the uncrossing argument used to prove the feasibility of the LP. The resulting proof, which forms the basis of Section~\ref{sec:feasibility}, was initially drafted with assistance from ChatGPT-6 Astra. The argument was subsequently checked in full by the authors, who are responsible for its verification and final presentation.

\bibliographystyle{abbrvurl}

\bibliography{refs}

@inproceedings{BercziEtAl,
	title        = {Interaction between skew-representability, tensor products, extension properties, and rank inequalities},
	author       = {Krist{\'o}f B{\'e}rczi and Bogl{\'a}rka Geh{\'e}r and Andr{\'a}s Imolay and L{\'a}szl{\'o} Lov{\'a}sz and Carles Padr{\'o} and Tam{\'a}s Schwarcz},
	year         = 2026,
	booktitle    = {SODA},
	pages        = {328--354}
}

@article{Rado1957,
	title        = {Note on independence functions},
	author       = {Rado, R.},
	year         = 1957,
	journal      = {Proceedings of the London Mathematical Society},
	volume       = 7,
	pages        = {300--320},
	doi          = {10.1112/plms/s3-7.1.300}
}

@article{lindley,
 author = {D. V. Lindley},
 journal = {Journal of the Royal Statistical Society. Series C (Applied Statistics)},
 number = {1},
 pages = {39--51},
 title = {Dynamic Programming and Decision Theory},
 volume = {10},
 year = {1961}
}

@article{dynkin,
    author = {Dynkin, E. B.},
    journal = {Soviet Math. Dokl},
    title = {{The optimum choice of the instant for stopping a Markov process}},
    volume = {4},
    year = {1963},
  pages = {627--629}
}

@inproceedings{klein-secr,
author = {Kleinberg, Robert},
title = {A Multiple-Choice Secretary Algorithm with Applications to Online Auctions},
year = {2005},
booktitle = {SODA},
pages = {630--631}
}

@inproceedings{sqrt-log-r-matroid-sec,
author = {Chakraborty, Sourav and Lachish, Oded},
title = {Improved competitive ratio for the matroid secretary problem},
year = {2012},
booktitle = {SODA},
pages = {1702--1712}
}

@INPROCEEDINGS{loglogrank-matroid-sec1,
  author={O. {Lachish}},
  booktitle={FOCS}, 
  title={O(log log Rank) Competitive Ratio for the Matroid Secretary Problem}, 
  year={2014},
  volume={},
  number={},
  pages={326-335}
}

@article{loglogrank-matroid-sec2,
author = {Feldman, Moran and Svensson, Ola and Zenklusen, Rico},
title = {A Simple {O}(log log(rank))-Competitive Algorithm for the Matroid Secretary Problem},
journal = {Mathematics of Operations Research},
volume = {43},
number = {2},
pages = {638-650},
year = {2018}
}

@article{bateni-secretary,
  title={Submodular secretary problem and extensions},
  author={Bateni, MohammadHossein and Hajiaghayi, MohammadTaghi and Zadimoghaddam, Morteza},
  journal={ACM Transactions on Algorithms},
  volume={9},
  number={4},
  pages={1--23},
  year={2013}
}

@article{forbidden-paper,
  title={Strong algorithms for the ordinal matroid secretary problem},
  author={Soto, Jos{\'e} A and Turkieltaub, Abner and Verdugo, Victor},
  journal={Mathematics of Operations Research},
  volume={46},
  number={2},
  pages={642--673},
  year={2021}
}

@Article{trans-secretary,
author={Dimitrov, Nedialko B. and Plaxton, C. Greg},
title={Competitive Weighted Matching in Transversal Matroids},
journal={Algorithmica},
year={2012},
volume={62},
number={1},
pages={333-348}
}

@inproceedings{KesselheimRTV13,
  author       = {Thomas Kesselheim and
                  Klaus Radke and
                  Andreas T{\"{o}}nnis and
                  Berthold V{\"{o}}cking},
  title        = {An Optimal Online Algorithm for Weighted Bipartite Matching and Extensions
                  to Combinatorial Auctions},
  booktitle    = {ESA},
  pages        = {589--600},
  year         = {2013}
}

@inproceedings{babaioff-secretary-2,
author = {Babaioff, Moshe and Dinitz, Michael and Gupta, Anupam and Immorlica, Nicole and Talwar, Kunal},
title = {Secretary Problems: Weights and Discounts},
year = {2009},
booktitle = {SODA},
pages = {1245--1254}
}

@inproceedings{korula-pal-graphic,
  title={Algorithms for secretary problems on graphs and hypergraphs},
  author={Korula, Nitish and P{\'a}l, Martin},
  booktitle={ICALP},
  pages={508--520},
  year={2009}
}

@inproceedings{ImW11,
  author       = {Sungjin Im and
                  Yajun Wang},
  title        = {Secretary Problems: Laminar Matroid and Interval Scheduling},
  booktitle    = {SODA},
  pages        = {1265--1274},
  year         = {2011}
}

@inproceedings{free-order-secretary,
  author       = {Patrick Jaillet and Jos{\'{e}} A. Soto and Rico Zenklusen},
  title        = {Advances on Matroid Secretary Problems: Free Order Model and Laminar Case},
  booktitle    = {IPCO},
  pages        = {254--265},
  year         = {2013}
}

@InProceedings{ma-tang-secretary,
  author =	{Ma, Tengyu and Tang, Bo and Wang, Yajun},
  title =	{{The Simulated Greedy Algorithm for Several Submodular Matroid Secretary Problems}},
  booktitle =	{STACS},
  pages =	{478--489},
  year =	{2013}
}

@article{dinitz-secretary,
author = {Dinitz, Michael and Kortsarz, Guy},
title = {Matroid Secretary for Regular and Decomposable Matroids},
journal = {SIAM Journal on Computing},
volume = {43},
number = {5},
pages = {1807-1830},
year = {2014}
}

@article{GharanV13,
  author       = {Shayan Oveis Gharan and
                  Jan Vondr{\'{a}}k},
  title        = {On Variants of the Matroid Secretary Problem},
  journal      = {Algorithmica},
  volume       = {67},
  number       = {4},
  pages        = {472--497},
  year         = {2013}
}

@inproceedings{matroid-embeddings,
author =	{Cristi, Andr\'{e}s and D\"{u}tting, Paul and Kleinberg, Robert and Paes Leme, Renato and Patel, Neel},
  title =	{{Online Matroid Embeddings}},
  booktitle =	{ICALP},
  pages =	{69:1--69:24},
  year =	{2026},
  volume =	{374}
}

@inproceedings{dughmi1,
  author    = {Shaddin Dughmi},
  title     = {The Outer Limits of Contention Resolution on Matroids and Connections
               to the Secretary Problem},
  booktitle = {ICALP},
  pages     = {42:1--42:18},
  year      = {2020}
}

@inproceedings{dughmi2,
  author    = {Shaddin Dughmi},
  title     = {Matroid Secretary Is Equivalent to Contention Resolution},
  booktitle = {ITCS},
  pages     = {58:1--58:23},
  year      = {2022}
}

@inproceedings{BahraniBSW21,
  author       = {Maryam Bahrani and
                  Hedyeh Beyhaghi and
                  Sahil Singla and
                  S. Matthew Weinberg},
  title        = {Formal Barriers to Simple Algorithms for the Matroid Secretary Problem},
  booktitle    = {WINE},
  pages        = {280--298},
  year         = {2021}
}

@Inproceedings{matroid-partition,
  author =	{Abdolazimi, Dorna and Karlin, Anna R. and Klein, Nathan and Oveis Gharan, Shayan},
  title =	{{Matroid Partition Property and the Secretary Problem}},
  booktitle =	{ITCS},
  pages =	{2:1--2:9},
  year =	{2023},
  volume =	{251}
}

@article{soto-secretary,
author = {Soto, Jos\'{e} A.},
title = {Matroid Secretary Problem in the Random-Assignment Model},
journal = {SIAM Journal on Computing},
volume = {42},
number = {1},
pages = {178-211},
year = {2013}
}

@inproceedings{correa-googol,
author = {Jos{\'e} R. Correa and Andr{\'e}s Cristi and Boris Epstein and Jos{\'e} A. Soto},
title = {The Two-Sided Game of Googol and Sample-Based Prophet Inequalities},
booktitle = {SODA},
year = {2020},
pages = {2066-2081}
}

@inproceedings{BIK2007,
  author       = {Moshe Babaioff and
                  Nicole Immorlica and
                  Robert Kleinberg},
  editor       = {Nikhil Bansal and
                  Kirk Pruhs and
                  Clifford Stein},
  title        = {Matroids, secretary problems, and online mechanisms},
  booktitle    = {SODA},
  pages        = {434--443},
  year         = {2007},
}

@article{mat-sec,
  author       = {Moshe Babaioff and Nicole Immorlica and David Kempe and Robert Kleinberg},
  title        = {Matroid Secretary Problems},
  journal      = {Journal of the ACM},
  volume       = {65},
  number       = {6},
  pages        = {35:1--35:26},
  year         = {2018},
  doi          = {10.1145/3212512}
}

@inproceedings{labeling-schemes,
  author       = {Krist{\'{o}}f B{\'{e}}rczi and
                  Vasilis Livanos and
                  Jos{\'{e}} A. Soto and
                  Victor Verdugo},
  title        = {Matroid Secretary via Labeling Schemes},
  booktitle    = {IPCO},
  pages        = {128--141},
  year         = {2025}
}

@inproceedings{banihashem2025beating,
  author       = {Kiarash Banihashem and
                  MohammadTaghi Hajiaghayi and
                  Dariusz R. Kowalski and
                  Piotr Krysta and
                  Danny Mittal and
                  Jan Olkowski},
  title        = {Beating Competitive Ratio 4 for Graphic Matroid Secretary},
  booktitle    = {ESA},
  pages        = {52:1--52:16},
  year         = {2025}
}

@inproceedings{zahra-laminar-secretary,
  author       = {Zhiyi Huang and
                  Zahra Parsaeian and
                  Zixuan Zhu},
  title        = {Laminar Matroid Secretary: Greedy Strikes Back},
  booktitle    = {ESA},
  pages        = {73:1--73:8},
  year         = {2024}
}

@article{SantiagoSZ23,
  author       = {Richard Santiago and
                  Ivan Sergeev and
                  Rico Zenklusen},
  title        = {Constant-competitiveness for random assignment Matroid secretary without
                  knowing the Matroid},
  journal      = {Mathematical Programming},
  volume       = {210},
  number       = {1},
  pages        = {815--846},
  year         = {2025}
}

@article{FeldmanZenklusen2018,
  author       = {Moran Feldman and
                  Rico Zenklusen},
  title        = {The Submodular Secretary Problem Goes Linear},
  journal      = {{SIAM} Journal on Computing},
  volume       = {47},
  number       = {2},
  pages        = {330--366},
  year         = {2018}
}

@misc{Singla2026,
  author = {Sahil Singla},
  title = {The Matroid Secretary Conjecture is True},
  year = {2026},
  eprint = {2609.14555},
  archivePrefix = {arXiv},
  primaryClass = {cs.DS},
  note = {arXiv:2609.14555v1},
  url = {https://arxiv.org/abs/2609.14555}
}

@misc{DuttingEtAl2026,
  author = {Paul D{\"u}tting and Renato {Paes Leme} and Martin P{\'a}l and Neel Patel},
  title = {Graphic Matroid Secretary without the Graph},
  year = {2026},
  eprint = {2608.11413},
  archivePrefix = {arXiv},
  primaryClass = {cs.DS},
  note = {arXiv:2608.11413},
  url = {https://arxiv.org/abs/2608.11413}
}

@article{BuchbinderJS14,
  author       = {Niv Buchbinder and
                  Kamal Jain and
                  Mohit Singh},
  title        = {Secretary Problems via Linear Programming},
  journal      = {Mathematics of Operations Research},
  volume       = {39},
  number       = {1},
  pages        = {190--206},
  year         = {2014}
}

@inproceedings{ChanCJ15,
  author       = {T.-H. Hubert Chan and
                  Fei Chen and
                  Shaofeng H.-C. Jiang},
  title        = {Revealing Optimal Thresholds for Generalized Secretary Problem via
                  Continuous {LP}: Impacts on Online {K}-Item Auction and Bipartite
                  {K}-Matching with Random Arrival Order},
  booktitle    = {SODA},
  pages        = {1169--1188},
  publisher    = {{SIAM}},
  year         = {2015}
}

@article{DuttingLLV24,
  author       = {Paul D{\"{u}}tting and
                  Silvio Lattanzi and
                  Renato Paes Leme and
                  Sergei Vassilvitskii},
  title        = {Secretaries with Advice},
  journal      = {Mathematics of Operations Research},
  volume       = {49},
  number       = {2},
  pages        = {856--879},
  year         = {2024}
}

@article{CorreaCES24,
  author       = {Jos{\'{e}} Correa and
                  Andr{\'{e}}s Cristi and
                  Boris Epstein and
                  Jos{\'{e}} A. Soto},
  title        = {Sample-Driven Optimal Stopping: From the Secretary Problem to the
                  i.i.d. Prophet Inequality},
  journal      = {Mathematics of Operations Research},
  volume       = {49},
  number       = {1},
  pages        = {441--475},
  year         = {2024}
}

@misc{abdi2026strong,
      title={On the Strong Matroid Secretary Conjecture and Beyond}, 
      author={Hamed Abdi and Kiarash Banihashem and MohammadTaghi Hajiaghayi and Danny Mittal},
      year={2026},
      eprint={2609.19118},
      archivePrefix={arXiv},
      primaryClass={cs.DS},
      url={https://arxiv.org/abs/2609.19118}, 
}

\end{document}